\pgfplotsset{compat=newest,
legend image code/.code={
\draw[mark repeat=2,mark phase=2]
plot coordinates {
(0cm,0cm)
(0.15cm,0cm)        
(0.3cm,0cm)         
};%
}}
\algrenewcommand\algorithmicindent{10pt}
\definecolor{color1}{HTML}{0011af}
\definecolor{color2}{HTML}{8819a0}
\definecolor{color3}{HTML}{bf418d}
\definecolor{color4}{HTML}{e37076}
\definecolor{color5}{HTML}{f9a256}
\definecolor{color6}{HTML}{FF0000}
\definecolor{color7}{HTML}{009F6B}
\definecolor{color8}{HTML}{00CC99}
\DeclareDocumentCommand\partialderivative{ s o m g d() }
{ 
    \IfBooleanTF{#1}
    {\let\fractype\flatfrac}
    {\let\fractype\frac}
    \IfNoValueTF{#4}
    {
        \IfNoValueTF{#5}
        {\fractype{\partial \IfNoValueTF{#2}{}{^{#2}}}{\partial #3\IfNoValueTF{#2}{}{^{#2}}}}
        {\fractype{\partial \IfNoValueTF{#2}{}{^{#2}}}{\partial #3\IfNoValueTF{#2}{}{^{#2}}} \argopen(#5\argclose)}
    }
    {\fractype{\partial \IfNoValueTF{#2}{}{^{#2}} #3}{\partial #4\IfNoValueTF{#2}{}{^{#2}}}\IfValueT{#5}{(#5)}}
}
\def\BState{\State\hskip-\ALG@thistlm}
\algnewcommand\algorithmicforeach{\textbf{for each}}
\algnewcommand{\LineComment}[1]{\Statex \hskip\ALG@thistlm #1}
\crefname{equation}{}{}
\crefname{figure}{Fig.}{Figs.}
\crefname{theorem}{Theorem}{Theorems}
\crefname{definition}{Definition}{Definitions}
\crefname{lemma}{Lemma}{Lemmas}
\crefname{proposition}{Proposition}{Propositions}
\crefname{appendix}{Appendix}{Appendices}
\crefname{table}{Table}{Tables}
\crefname{corollary}{Corollary}{Corollaries}
\crefname{section}{Section}{Sections}
\crefname{algorithm}{Algorithm}{Algorithms}
\def\R{\mathbb{R}}
\def\C{\mathbb{C}}
\def\E{\mathbb{E}}
\def\Var{\text{Var}}
\newcommand{\bs}[1]{\boldsymbol{#1}}
\DeclareMathOperator*{\argmax}{\arg\max}
\DeclareMathOperator*{\argmin}{\arg\min}
\renewcommand{\Re}{\mathfrak{R}}
\renewcommand{\Im}{\mathfrak{I}}
\newtheorem{proposition}{Proposition}
\newtheorem{corollary}{Corollary}
\newtheorem{remark}{Remark}
\newtheorem{definition}{Definition}
\newacronym[plural=RISs,firstplural=Reconfigurable Intelligent Surfaces (RISs)]{ris}{RIS}{Reconfigurable Intelligent Surface}
\newacronym[plural=SPs,firstplural=scattering points (SPs)]{sp}{SP}{scattering point}
\newacronym[plural=UEs,firstplural=user equipments (UEs)]{ue}{UE}{user equipment}
\newacronym{ofdm}{OFDM}{orthogonal frequency division multiplexing}
\newacronym{upa}{UPA}{uniform planar array}
\newacronym{csi}{CSI}{channel state information}
\newacronym{snr}{SNR}{signal-to-noise-ratio}
\newacronym{isac}{ISAC}{integrated sensing and communication}
\newacronym{los}{LOS}{line-of-sight}
\newacronym{nlos}{NLOS}{non-line-of-sight}
\newacronym{music}{MUSIC}{multiple signal classification}
\newacronym{sb}{SB}{single bounce}
\newacronym{db}{DB}{double bounce}
\newacronym{aoa}{AOA}{angle of arrival}
\newacronym{aod}{AOD}{angle of departure}
\newacronym{toa}{TOA}{time of arrival}
\newacronym{crlb}{CRLB}{Cram\'er-Rao lower bound}
\newacronym{awgn}{AWGN}{additive white Gaussian noise}
\newacronym{ml}{ML}{maximum likelihood}
\newacronym{em}{EM}{expectation-maximization}
\newacronym{peb}{PEB}{position error bound}
\newacronym{deb}{DEB}{delay error bound}
\newacronym{aeb}{AEB}{angle error bound}
\newacronym{fim}{FIM}{Fisher information matrix}
\newacronym{efim}{EFIM}{equivalent Fisher information matrix}
\newacronym{ospa}{OSPA}{optimal sub-pattern assignment}
\newacronym{gospa}{GOSPA}{generalized optimal sub-pattern assignment}
\newacronym{cdf}{CDF}{cumulative distribution function}
\newacronym{roc}{ROC}{receiver operating characteristic}
\newacronym{auc}{AUC}{area under the curve}
\newacronym{rcs}{RCS}{radar cross section}
\newacronym{omp}{OMP}{orthogonal matching pursuit}
\newacronym{slam}{SLAM}{simultaneous localization and mapping}
\newacronym{mimo}{MIMO}{multiple-input multiple-output}
\newacronym{siso}{SISO}{single-input single-output}
\newacronym{bs}{BS}{base station}
\begin{document}
\pgfplotsset{
    colormap/winter,
}

\title{RIS-Assisted High Resolution Radar Sensing}

\author{%
Martin~V.~Vejling~\IEEEmembership{Student Member,~IEEE}, Hyowon Kim~\IEEEmembership{Member,~IEEE}, Christophe A.~N.~Biscio,\\Henk Wymeersch~\IEEEmembership{Fellow,~IEEE}, Petar Popovski~\IEEEmembership{Fellow,~IEEE}
\thanks{M.~V.~Vejling (mvv@\{math,es\}.aau.dk, corresponding author) is with the Dept.~of Mathematical Sciences and the Dept.~of Electronic Systems, Aalborg University, Denmark. C.~A.~N.~Biscio (christophe@math.aau.dk) is with the Dept.~of Mathematical Sciences, Aalborg University, Denmark. P.~Popovski (petarp@es.aau.dk) is with the Dept.~of Electronic Systems, Aalborg University, Denmark. H. Wymeersch (henkw@chalmers.se) is with the Dept.~of Electrical Engineering, Chalmers University of Technology, Sweden. H.~Kim (hyowon.kim@cnu.ac.kr) is with the Dept.~of Electronics Engineering, Chungnam National University, Daejeon, South Korea. The work of M.~V.~Vejling and P.~Popovski was partly funded by the Villum Investigator Grant ``WATER'' financed by the Villum Foundation, Denmark. The work of H.~Wymeersch has been supported, in part, by the SNS JU project 6G-DISAC under the EU's Horizon Europe research and innovation program under Grant Agreement No 101139130.}}

\markboth{}%
{RIS-Assisted High Resolution Radar Sensing}

\maketitle

\begin{abstract}
This paper analyzes monostatic sensing by a user equipment (UE) for a setting in which the UE is unable to resolve multiple targets due to their interference within a single resolution bin. It is shown how sensing accuracy, in terms of both detection rate and localization accuracy, can be boosted by a reconfigurable intelligent surface (RIS), which can be advantageously used to provide signal diversity and aid in resolving the targets. Specifically, assuming prior information on the presence of a cluster of targets, a RIS beam sweep procedure is used to facilitate the high resolution sensing. We derive the Cramér-Rao lower bounds (CRLBs) for channel parameter estimation and sensing and an upper bound on the detection probability. The concept of coherence is defined and analyzed theoretically. 
Then, we propose an orthogonal matching pursuit (OMP) channel estimation algorithm combined with data association to fuse the information of the non-RIS signal and the RIS signal and perform sensing. Finally, we provide numerical results to verify the potential of RIS for improving sensor resolution, and to demonstrate that the proposed methods can realize this potential for RIS-assisted high resolution sensing.
\end{abstract}

\begin{IEEEkeywords}
reconfigurable intelligent surface, high resolution sensing, orthogonal matching pursuit, Cramér-Rao lower bound, detection probability
\end{IEEEkeywords}
\glsresetall

\section{Introduction}
\noindent \Gls{isac} is anticipated to be a significant technological advancement in Beyond-5G and 6G communication systems \cite{Liu2022}. It is motivated by the limited availability of spectrum caused by increasing demands on key performance indicators in sensing and communication, and the potential mutual benefits between the signals of the two systems \cite{Liu2022:Fundamental}.
\Glspl{ris} can be seen as one enabling technology for \gls{isac}, offering the possibility to advantageously control the wireless propagation environment~\cite{Chepuri2023:Integrated,Bjornson2022}.

Sensing should not degrade communication performance and therefore rely on minimal requirements on energy, spectrum usage, and delay due to pilot transmissions \cite{Liu2022}. Still, sensing needs to be accurate and implementable by low-cost devices with limited hardware specifications to make it ubiquitous \cite{Cui2021:Integrating}. It can be observed that, either the feasible sensing performance is dictated by the limited sensing resources or the sensing performance requirements dictate the necessary sensing resources \cite{Liu2022}.

%
\begin{figure}
    \centering
    \includegraphics[width=0.8\linewidth]{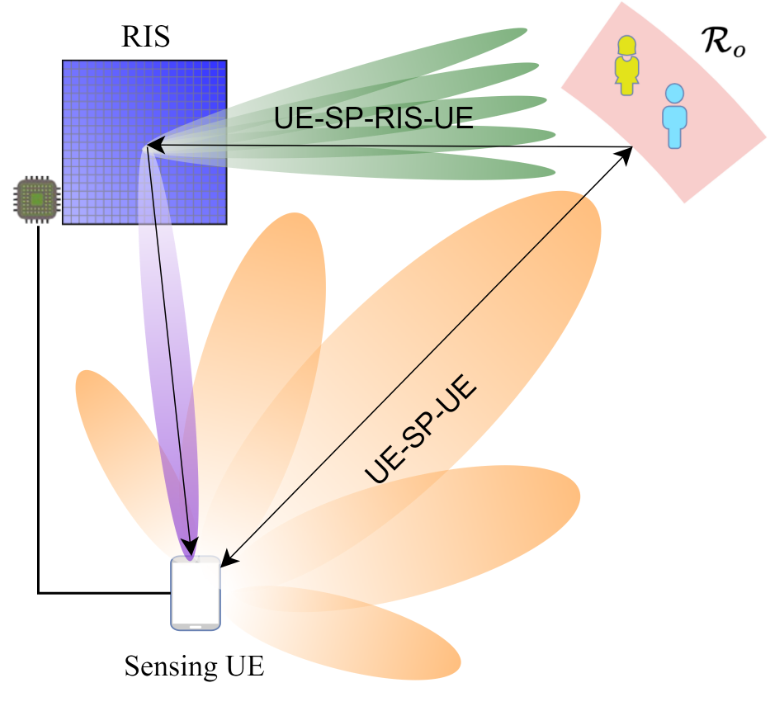}
    \caption{Illustration of the \gls{ris}-aided sensing scenario with a prior (not drawn to scale).}
    \label{fig:scenario}
\end{figure}

We consider a setup where an \gls{ris} is available and devise a full sensing protocol for the environment. The \gls{ris} can operate in two modes: (i) exclusive mode where the \gls{ue} gets associated with the \gls{ris} and can send signals via a dedicated control channel, or (ii) shared mode where the \gls{ris} changes configurations according to an announced schedule that the \gls{ue} has to use. In both cases, the monostatic sensing \gls{ue} attempts to sense the environment and decides whether to use the \gls{ris} in shared or exclusive mode.
The scenario is depicted in \cref{fig:scenario}: here a sensor with limited resolution, i.e., a low bandwidth and a small antenna array, attempts to resolve a \gls{sp} cluster which we define as a collection of \glspl{sp} within a resolution bin of the \gls{ue}. Without the \gls{ris}, only a single \gls{sp} is detected which can be critical for some sensing applications. Consider a vehicular network scenario with a human close to a car, as in \cite{Tagliaferri2023:Cooperative}: we risk causing injury if the human is not detected, however, with the support of an \gls{ris}, we can improve the resolution of the sensor \cite{Wymeersch2020}, thereby avoiding such failures. The improvement in resolution is due to two properties: (i) the \gls{ris} adds spatial diversity allowing a view of the \gls{sp} cluster from a different direction, and (ii) the \gls{ris} can have a high angular resolution \cite{Wymeersch2020}.


\subsection{Related Work}
\noindent Whilst sensing with \gls{ris} has significant potential \cite{Wymeersch2020}, there are also limitations, particularly due to the severe attenuation of double-bounce paths \cite{Tang2021:RIS}.
For this reason, most works focus on \gls{ris}-enabled scenario where: 
\gls{ris} enables a \gls{ue} to sense \glspl{sp} in \gls{nlos} through the indirect \gls{los} path via the \gls{ris} \cite{Aubry2021Surveillance,Meng2022:Sensing}; \gls{ris} provides an anchor in place of a \gls{bs} which can be used for self-localization \cite{Keykhosravi2022}; cooperative sidelink positioning is enabled by using multiple \glspl{ris} as anchor points \cite{Chen2023:Smart,Chen2024:Sidelink}; multiple \glspl{ris} with a priori unknown locations are used for \gls{slam} \cite{Yang2022MetaSLAM}; and more \cite{Zhang2022Ubiquitous}.

Still, some works explore \gls{ris}-assisted scenarios \cite{Huang2023:Harnessing,Zhang2022:MetaRadar,Hyowon2023:Bounce}. A two-step approach to \gls{slam} based on weighted least squares is presented in \cite{Huang2023:Harnessing}, exploiting the \gls{nlos} paths from active and passive scatterers to assist in localization. However, they assume perfectly known complex channel coefficients such that the \gls{ris} phase profiles can be optimized to maximize the radar cross section for sensing. This is not an assumption that can be realized in fast fading environments. A monostatic sensing scenario considering both \gls{sb} and \gls{db} multipaths is considered in \cite{Zhang2022:MetaRadar}, showing significant performance gains by exploiting the \gls{ris}, however, with the assumption that the sensing \gls{ue} is only a few wavelengths away from the \gls{ris}, which essentially requires that the \gls{ris} is attached to the \gls{ue}. A similar scenario is considered in \cite{Hyowon2023:Bounce}, however, without collocated \gls{ue} and \gls{ris}. The main takeaway is that the \gls{db} measurements provide no real performance gains.
We position our work in extension to \cite{Hyowon2023:Bounce}, and consider a scenario where using the \gls{db} signal to increase the precision of the sensing attainable with the \gls{sb} signal can provide noticeable performance gains.


The potential of \gls{ris} for sensing with low resource devices has been explored in \cite{Zhang2023:WiFi,Hu2021:MetaSensing,He2023:WiFi}: \cite{Zhang2023:WiFi} considers sensing in an indoor scenario using commodity WiFi signals and \gls{ris}, while \cite{Hu2021:MetaSensing,He2023:WiFi} investigates high-resolution capabilities of \gls{ris} in the realm of extended targets. 
The motivation of \cite{Zhang2023:WiFi} is that the low resolution properties in angle and delay of WiFi makes sensing challenging but this can be alleviated with an \gls{ris}. Considering then a dynamic indoor scenario with static interfering reflectors, they optimize the \gls{ris} phases to locate the moving targets.
%
In \cite{Hu2021:MetaSensing}, a cross-entropy loss function is optimized for the \gls{ris} phase profiles and the parameters of a neural network that maps the received signal into the estimate for the target shape. The optimization relies on a Markov decision process framework, and notably such an approach requires an extensive training procedure to calibrate the model, thereby making it incapable of adapting to non-stationary scenarios. In a related work \cite{He2023:WiFi}, the \gls{ris} focuses beams towards each point in a grid, and an optimization algorithm is proposed to estimate the target shape. This method relies on the use of an impractical number of \gls{ris} phase profiles, and the complexity of the optimization problem scales cubically with the number of grid points.
%


\subsection{Contributions and Organization}
\noindent We consider the scenario depicted in \cref{fig:scenario}, where a sensor with limited resolution attempts to resolve an \gls{sp} cluster, and devise a full sensing protocol in which the \gls{ue}, following an initial sensing step, can ask for \gls{ris} assistance in shared or exclusive mode. We study how an \gls{ris} can improve the effective sensing resolution, presenting a novel upper bound on the detection probability and the \gls{crlb} on positioning. In specific cases, we draw new insights from analytical expressions of these bounds and introduce a novel coherence concept. Using \gls{omp}, the \gls{sb} and \gls{db} parameters are estimated in parallel, followed by data association in the domain of the \gls{aoa} at the \gls{ris}, and joint position estimation, realized via weighted non-linear least squares.

Our main contributions can be summarized as follows:
\begin{enumerate}
    \item An upper bound on the detection probability of each target is derived from an oracle estimator relying on a greedy approach similar to \gls{omp}. For the specific case of three targets, analytical expressions are provided. We define a concept of coherence and show how it limits the detection probability. 
    \item A Fisher analysis of channel parameter estimation and localization is presented, deriving \glspl{deb}, \glspl{aeb}, and \glspl{peb} for both \gls{sb} and \gls{db} signals, as well as a joint \gls{peb} when combining \gls{sb}/\gls{db} information. For a simplified case with two targets, we give an analytical expression for the equivalent Fisher information of the azimuth angle parameter, discovering that the information loss of the first target due to the second target depends on the coherence.
    \item We use \gls{omp} for estimation of the \gls{sb} and \gls{db} channel parameters. Taking data association into account, we devise a joint sensing procedure for the position. Through numerical analysis, we display the potential performance gains by using an \gls{ris} for resolving an \glspl{sp} cluster. Particularly, for an \gls{sp} cluster of three targets, we are able to improve the \gls{auc} for the third target from $0.55$ without the \gls{ris} to $0.95$ with the \gls{ris}, and improve the \gls{gospa} metric by up to $60~\%$.
\end{enumerate}


The paper is organized as follows. \cref{sec:system_model} presents the scenario herein the assumptions on the prior knowledge, the design of precoder and \gls{ris} phase profiles, and the resulting signal model. The theoretical bounds on detection, channel estimation, and localization are presented in \cref{sec:theoretical_analysis}. In \cref{sec:channel_estimation}, the \gls{omp} channel estimation algorithm is recalled, followed by the proposed data association method, and the joint position estimation algorithm. Numerical results are presented in \cref{sec:numerical_experiments}, and \cref{sec:conclusion} concludes the paper and presents some future research directions.

\emph{Supplementary resources:} The code used for simulations can be found at \href{https://github.com/Martin497/RIS-Assisted-High-Resolution-Radar-Sensing}{https://github.com/Martin497/RIS-Assisted-High-Resolution-Radar-Sensing}.

\textit{Notation:} The notations $(\cdot)^\top$, $(\cdot)^H$, $\Vert \cdot \Vert$, $\langle \cdot, \cdot \rangle$, $(\cdot)^\dagger$, and ${\rm Tr}(\cdot)$ denotes the transpose, Hermitian transpose, $2$-norm, inner product, Moore-Penrose pseudo-inverse, and trace operators, respectively. We use $\otimes$ and $\odot$ for the Kronecker and Hadamard products, respectively. $\C^{N\times M}$ denotes the space of $(N \times M)$-dimensional complex matrices. The notation $\Re\{\cdot\}$ and $\Im\{\cdot\}$ refers to the real part and imaginary part of a complex number, respectively, and we use $j$ to denote the complex unit. For a matrix $\bs{A}$, the notation $[\bs{A}]_{a:b,c:d}$ refers to the submatrix of $\bs{A}$ from the $a$-th row to the $b$-th row and from the $c$-th column to the $d$-column. Moreover, for a matrix $\bs{A}$, the notation $\exp(j\bs{A})$ refers to element-wise evaluation of the complex exponential function. Finally, $\E$ and ${\rm Var}$ denotes the expectation and variance operators, respectively.

\section{System Model}\label{sec:system_model}
\subsection{Considered Scenario}
\noindent \cref{fig:scenario} depicts a scenario where a \gls{ue} attempts to sense the environment consisting of an \gls{ris} and $L$ targets also called \glspl{sp}. We assume a static scenario, so that the \gls{ue}, \gls{ris}, and \glspl{sp} have fixed positions, in a domain $\mathcal{D} \subset \R^3$. The \gls{ue}'s state is known and given as $\bs{s}=[\bs{p}^\top, \bs{o}^\top]^\top$ where $\bs{p}\in\mathcal{D}$ is the spatial position and $\bs{o}=[o_1, o_2, o_3]$ is the three-dimensional orientation vector specifying the Euler angles. Similarly, the \gls{ris} has state $\bs{s}_{\rm r} = [\bs{p}_{\rm r}^\top, \bs{o}_{\rm r}^\top]^\top$ for spatial position $\bs{p}_{\rm r} \in \mathcal{D}$ and orientation vector $\bs{o}_{\rm r}$. The \glspl{sp} constitute a point pattern $\Phi=\{\bs{c}_1, \dots, \bs{c}_L\}$ where $\bs{c}_l\in\mathcal{D}$ for $l=1,\dots,L$ and $L$ is the number of \glspl{sp} which is unknown a priori.

We assume that the \gls{ue} is full-duplex operating in \gls{ofdm} at carrier frequency $f_c$, such that the wavelength is $\lambda=c/f_c$ where $c$ is the speed of light, and subcarrier frequencies $f_n = f_c + n\Delta_f$, for $n=0,\dots,N-1$ where $N$ is the number of subcarriers and $\Delta_f$ is the subcarrier spacing. Then, the bandwidth of the sensing \gls{ue} is $W=N \Delta_f$. The \gls{ue} is equipped with a \gls{upa} consisting of $N_{\rm u} = N_{\rm u}^{\text{az}} \times N_{\rm u}^{\text{el}}$ antennas that are spaced with distance $\lambda/2$. We assume that the \gls{ris} has $N_{\rm r}=N_{\rm r}^{\text{az}} \times N_{\rm r}^{\text{el}}$ elements with distance $\lambda/4$, and that $N_{\rm r} \gg N_{\rm u}$.
To allow synchronization and control of \gls{ris}, we assume the \gls{ue} has an out-of-band channel to the \gls{ris} controller \cite{An2022:Control,Saggese2023:Control}.

\subsection{Received Signal}
\noindent The received signal for the $t$-th \gls{ofdm} symbol and $n$-th subcarrier is modeled as
\begin{equation}
    \bs{y}^{\rm sys}_{t, n} = \big(\bs{H}^{\rm los}_{n} + \bs{H}^{\rm ris}_{n}(\bs{\omega}_t)\big)\bs{f}_{t} + \bs{\varepsilon}^{\rm sys}_{t, n}
\end{equation}
for $t=1,\dots,T$ and $n=0,\dots,N-1$ where $\bs{y}^{\rm sys}_{t, n}\in\C^{N_{\rm u}}$ is the received signal,
$\bs{\omega}_t$ are the \gls{ris} phase profiles given by $\bs{\omega}_{t}=[\text{exp}(j\varphi_{t,1}),\dots,\text{exp}~(j\varphi_{t,N_{\rm r}})]^\top$ for phase-shift of the $i$-th element $\varphi_{t,i}$, $i=1,\dots,N_{\rm r}$, $\bs{H}^{\rm los}_{n} \in \C^{N_{\rm u}\times N_{\rm u}}$ is the channel matrix for the direct \gls{los} paths, $\bs{H}^{\rm ris}_{n}(\bs{\omega}_t)\in \C^{N_{\rm u}\times N_{\rm u}}$ is the channel matrix for the \gls{ris}-assisted paths, $\bs{f}_{t}\in\C^{N_{\rm u}}$ is the precoder with $\Vert \bs{f}_{t}\Vert=1$, and $\bs{\varepsilon}^{\rm sys}_{t, n}$ is circularly-symmetric complex \gls{awgn} with zero mean and covariance $\sigma^2\bs{I}$, $\sigma > 0$.

The direct \gls{los} channel matrix is modeled as
\begin{equation}
    \bs{H}^{\rm los}_{n} = \sum_{l=1}^L \underbrace{\alpha_l \bs{A}_{\rm u}(\bs{\theta}_l, \bs{\theta}_l) d_n(\tau_l)}_{\text{UE-SP}_l\text{-UE}}.
\end{equation}
In this model, $\alpha_l \in \C$ are the complex path coefficients modeling path loss due to distance and scattering as well as stochastic phase shifts and \gls{rcs} fluctuation loss due to small movements or rotations. $\bs{\theta}_l=[\theta_l^{\rm az}, \theta_l^{\rm el}]^\top$, $l=1,\dots,L$, is the azimuth and elevation \gls{aoa} at the \gls{ue} from the $l$-th \gls{sp}, $\tau_l$ are the \glspl{toa} of the paths, and $d_n(\tau) = \exp(-j2\pi n\tau\Delta_f)$ is the delay response. The array matrix is given by $\bs{A}_{\rm u}(\bs{\theta}_l, \bs{\theta}_k) = \bs{a}_{\rm u}(\bs{\theta}_l)\bs{a}_{\rm u}^\top(\bs{\theta}_k)$. The array response vector is defined by $\bs{a}_{\rm u}(\bs{\theta}) = \exp(j \bs{P}_{\rm u}^\top \bs{k}(\bs{\theta}))$ where $\bs{P}_{\rm u} = [\bs{p}_{{\rm u},1},\dots,\bs{p}_{{\rm u},N_{\rm u}}]$ with $\bs{p}_{{\rm u},i}=[x_{{\rm u},i},y_{{\rm u},i},z_{{\rm u},i}]^\top$ are the positions of the antennas in local coordinates, and $\bs{k}(\bs{\theta})=\frac{2\pi}{\lambda}[\cos(\theta^{\text{az}})\sin(\theta^{\text{el}}),\sin(\theta^{\text{az}})\sin(\theta^{\text{el}}),\cos(\theta^{\text{el}})]^{\top}$ is the wavenumber vector \cite{AbuShaban2018Bounds}.

The \gls{ris}-assisted channel matrix is modeled as \cite{Hyowon2023:Bounce}
\begin{equation}
    \begin{split}
        \bs{H}^{\rm ris}_{n}(\bs{\omega}_t) &= \underbrace{\alpha_0 \nu(\bs{\phi}_0, \bs{\phi}_0; \bs{\omega}_t)\bs{A}_{\rm u}(\bs{\theta}_0, \bs{\theta}_0)d_n(\tau_0)}_{\text{UE-RIS-UE}}\\
        &\quad+\sum_{l=1}^L \underbrace{\bar{\alpha}_l \nu(\bs{\phi}_{l}, \bs{\phi}_0; \bs{\omega}_t)\bs{A}_{\rm u}(\bs{\theta}_0, \bs{\theta}_l) d_n(\bar{\tau}_l)}_{\text{UE-SP}_l\text{-RIS-UE}}\\
        &\quad+\sum_{l=1}^L \underbrace{\bar{\alpha}_l \nu(\bs{\phi}_{l}, \bs{\phi}_0; \bs{\omega}_t)\bs{A}_{\rm u}(\bs{\theta}_l, \bs{\theta}_0)d_n(\bar{\tau}_l)}_{\text{UE-RIS-SP}_l\text{-UE}}.
    \end{split}
\end{equation}
The \gls{ris} response is $\nu(\bs{\phi}_{l}, \bs{\phi}_0; \bs{\omega}_t) = \bs{\omega}_{t}^\top \big(\bs{a}_{\rm r}(\bs{\phi}_{l}) \odot \bs{a}_{\rm r}(\bs{\phi}_0)\big)$. The parameter $\bs{\phi}_l=[\phi_l^{\rm az}, \phi_l^{\rm el}]^\top$, $l=1,\dots,L$, is the \gls{aoa} at the \gls{ris} from the $l$-th \gls{sp}, and $\bs{\phi}_0$ is the \gls{aoa} at the \gls{ris} from the \gls{ue}. Similarly, $\bs{\theta}_0$ is the \gls{aoa} at the \gls{ue} from the \gls{ris}. As before, the array response vector is defined as $\bs{a}_{\rm r}(\bs{\phi}) = \exp(j \bs{P}_{\rm r}^\top \bs{k}(\bs{\phi}))$ where $\bs{P}_{\rm r} = [\bs{p}_{{\rm r},1},\dots,\bs{p}_{{\rm r},N_{\rm r}}]$ with $\bs{p}_{{\rm r},i}=[x_{{\rm r},i},y_{{\rm r},i},z_{{\rm r},i}]^\top$ are the positions of the \gls{ris} elements in local coordinates. The complex path coefficients are $\alpha_0$ and $\bar{\alpha}_l \in \C$ and we have assumed as in \cite{Hyowon2023:Bounce} that $\bar{\alpha}_l$ is shared between both \gls{db} paths, although this assumption is not needed in our methodology. Finally, $\tau_0, \bar{\tau}_l$ models the \glspl{toa} of the paths. The specifications of all the introduced channel parameters is elaborated in Appendix \ref{subsec:channel_parameters}.

\subsection{Constructing and Using the Prior}

\noindent We assume that in an initial sensing step, a prior map of the environment is made: (i) the \gls{sp} cluster of interest has been detected and lies in resolution region $\mathcal{R}_o = \{\bs{c}\in \mathcal{D}~|~\bs{h}_{\rm n}(\bs{c}) \in \bar{\mathcal{R}}_o\}$ where $\bs{h}_{\rm n}(\bs{c})$ is the mapping from the Euclidean domain into the non-\gls{ris} channel parameter space (see Appendix \ref{subsec:channel_parameters} for details) and $\bar{\mathcal{R}}_o = \bar{\mathcal{R}}_{o,\tau} \times \bar{\mathcal{R}}_{o,\theta^{\rm az}} \times \bar{\mathcal{R}}_{o,\theta^{\rm el}}$ for resolution intervals in the delay, azimuth, and elevation angles; and (ii) the \gls{ue} and \gls{ris} positions and orientations are perfectly estimated \cite{Keykhosravi2022}.

In practice, we can construct the resolution intervals through the width of the peak in the ambiguity function \cite{Antonio2007:Ambiguity}: we form a bounding box around points that satisfy $({\rm ref} - {\rm ref}_{\rm th})$ dB, where is the value of the ambiguity function in the detection peak and ${\rm ref}_{\rm th} > 0$ is a parameter, and find the largest and smallest delay, azimuth, and elevation angles within this bounding box.
Moreover, we can estimate the \gls{ue} and \gls{ris} positions and orientations using existing techniques \cite{Keykhosravi2022}.

After the initial sensing step, the \gls{ue} requests assistance from the \gls{ris} in exclusive mode. Alternatively, if resolving the \gls{sp} cluster is not time critical or if the \gls{ris} control does not allow for real-time reservation and control, the \gls{ue} can wait for the \gls{ris} beam sweep in shared mode to cover the resolution region.

\subsection{Precoder and RIS phase profile design}
\noindent To separate the \gls{ris} and non-\gls{ris} parts of the signal, we design the \gls{ris} phase profiles as a time-orthogonal sequence, i.e., $\bs{\omega}_{2\Tilde{t}-1}=-\bs{\omega}_{2\Tilde{t}}=\Tilde{\bs{\omega}}_{\Tilde{t}}$, $\Tilde{t}=1,\dots,\Tilde{T}$, $\Tilde{T}=T/2$, for design specific \gls{ris} phase profiles $\Tilde{\bs{\omega}}_{\Tilde{t}}$. Using the prior information about the \gls{sp} cluster and the \gls{ris} position, we construct the precoder, $\bs{f}$, towards the \gls{sp} cluster and with a null towards the \gls{ris}. This setup allows us to separately observe the \gls{ue}-\gls{sp}-\gls{ue} paths and the \gls{ue}-\gls{sp}-\gls{ris}-\gls{ue} paths.\footnote{Note that the \gls{ue}-\gls{ris}-\gls{sp}-\gls{ue} paths vanish due to the precoder's null towards the RIS. We opt to use only one of the \gls{db} signals because observing both, as in \cite{Hyowon2023:Bounce}, requires dividing the time resources that halves the number of different \gls{ris} phase profiles and thereby negatively impacting sensing resolution.}


We design the \gls{ris} phase profiles $\Tilde{\bs{\omega}}_{\Tilde{t}}$ to direct beams towards sub-regions of $\mathcal{R}_{\Tilde{t}} \subset \mathcal{R}_o$, as illustrated in \cref{fig:scenario}.
Implicitly, $\mathcal{R}_o$ specifies bounds in the \gls{aoa} at the \gls{ris}, 
denoted $\phi^{\text{az}}_{\text{min}}$, $\phi^{\text{az}}_{\text{max}}$, $\phi^{\text{el}}_{\text{min}}$, and $\phi^{\text{el}}_{\text{max}}$. Given this, let $\phi^{\text{az}}_i = \phi^{\text{az}}_{\text{min}} + \zeta_{\text{az}}/2 + i \zeta_{\text{az}}$ for $i=0,\dots,D_{\text{az}}-1$ and $\phi^{\text{el}}_i = \phi^{\text{el}}_{\text{min}} + \zeta_{\text{el}}/2 + i \zeta_{\text{el}}$ for $i=0,\dots,D_{\text{el}}-1$ be sample points where $\zeta_{\text{az}} = (\phi^{\text{az}}_{\text{max}} - \phi^{\text{az}}_{\text{min}})/D_{\text{az}}$ and $\zeta_{\text{el}} = (\phi^{\text{el}}_{\text{max}} - \phi^{\text{el}}_{\text{min}})/D_{\text{el}}$ are the spacing between sample points, and $D_{\text{az}}$, $D_{\text{el}}$ satisfy $\Tilde{T}=D_{\text{az}}D_{\text{el}}$.
Then, we specify the sequence of central angles at the \gls{ris} for the directional beamforming as
\begin{equation}
    \bs{\phi}_{\Tilde{t}} = \Big[\phi^{\text{az}}_{\Tilde{t}-D_{\text{el}}(\lceil \Tilde{t}/D_{\text{el}}\rceil-1)}, \phi^{\text{el}}_{\lceil \Tilde{t}/D_{\text{az}} \rceil}\Big]^{\top}
\end{equation}
for $\Tilde{t}=1,\dots,\Tilde{T}$, and we can construct focused beams as $\Tilde{\bs{\omega}}_{\Tilde{t}} = \big(\bs{a}_{\rm r}(\bs{\phi}_{\Tilde{t}}) \odot \bs{a}_{\rm r}(\bs{\phi}_0)\big)^*$. When the \gls{ris} is large or the number of channel uses is low, these beams can fail to illuminate the entire sub-region $\mathcal{R}_{\Tilde{t}}$. To overcome this issue, we construct $\Tilde{\bs{\omega}}_{\Tilde{t}}$ as uniform beams on the region $\mathcal{R}_{\Tilde{t}}$ \cite{Kim2023:SLAM}.

\subsection{Signal Model}
\noindent Using this design of precoder, and \gls{ris} phases, we can observe the two following signals. The \gls{sb} signal, also referred to as the non-\gls{ris} signal, is expressed as
\begin{align}
    \bs{y}^{\rm n}_{t,n} &= \frac{1}{2}\big(\bs{y}^{\rm sys}_{2t-1, n} + \bs{y}^{\rm sys}_{2t, n}\big)\nonumber\\
    &= \displaystyle\sum_{l=1}^L \alpha_l d_n(\tau_l) \bs{A}_{\rm u}(\bs{\theta}_l, \bs{\theta}_l) \bs{f} + \bs{\varepsilon}_{t, n}^{\rm n}
\end{align}
for $t=1,\dots,\Tilde{T}$ where $\bs{\varepsilon}_{t, n}^{\rm n}$ is zero mean with covariance $\frac{\sigma^2}{2}\bs{I}$. The \gls{db} signal, also referred to as the \gls{ris} signal, is: 
\begin{align}
    \bs{y}^{\rm r}_{t,n} &= \frac{1}{2}\big(\bs{y}^{\rm sys}_{2t-1, n} - \bs{y}^{\rm sys}_{2t, n}\big)\nonumber\\
    &= \displaystyle\sum_{l=1}^L \bar{\alpha}_l d_n(\bar{\tau}_l) \nu(\bs{\phi}_{l}, \bs{\phi}_0; \Tilde{\bs{\omega}}_t)\bs{A}_{\rm u}(\bs{\theta}_0, \bs{\theta}_l) \bs{f} + \bs{\varepsilon}_{t, n}^{\rm r}
\end{align}
for $t=1,\dots,\Tilde{T}$; $\bs{\varepsilon}_{t, n}^{\rm r}$ has zero mean and covariance $\frac{\sigma^2}{2}\bs{I}$ \cite{Hyowon2023:Bounce}.


For modeling purposes, we vectorize the signals. We begin with the non-\gls{ris} signal by stacking the received signals first across frequency and then time $\bs{y}^{\rm n} = [(\bs{y}^{\rm n}_{1, 0})^\top, \dots, (\bs{y}^{\rm n}_{1, N-1})^\top, (\bs{y}^{\rm n}_{2, 0})^\top, \dots, (\bs{y}^{\rm n}_{\Tilde{T}, N-1})^\top]^\top$,
and we define $\bs{d}(\tau) := [d_0(\tau), \dots, d_{N-1}(\tau)]^\top$.
Then, the received signal can be expressed as
\begin{subequations}\label{eq:lin_mod}
\begin{equation}\label{eq:nonris_lin_mod}
    \bs{y}^{\rm n} = \sum_{l=1}^L \alpha_l \bs{g}^{\rm n}(\bs{\eta}_l^{\rm n}) + \bs{\varepsilon} = \bs{G}^{\rm n} \bs{\alpha} + \bs{\varepsilon}^{\rm n}
\end{equation}
where $\bs{\varepsilon}^{\rm n}$ is zero mean with covariance $\frac{\sigma^2}{2}\bs{I}$, $\bs{g}^{\rm n}_l := \bs{g}^{\rm n}(\bs{\eta}_l^{\rm n}) = \langle \bs{a}_{\rm u}^*(\bs{\theta}_l), \bs{f}\rangle \bs{1}_{\Tilde{T}} \otimes \bs{d}(\tau_l) \otimes \bs{a}_{\rm u}(\bs{\theta}_l)$, $\bs{1}_{\Tilde{T}}$ is a column vector of ones with length $\Tilde{T}$, $\bs{G}^{\rm n} := [\bs{g}^{\rm n}_1, \dots, \bs{g}^{\rm n}_L]$, $\bs{\eta}_l^{\rm n} := [\tau_l, \bs{\theta}_l^\top]^\top$, and $\bs{\alpha} := [\alpha_1,\dots,\alpha_L]^\top$.

Continuing in a similar manner with the \gls{ris} signal, we stack the received signals first across frequency and then time $\bs{y}^{\rm r} = [(\bs{y}^{\rm r}_{1, 0})^\top, \dots, (\bs{y}^{\rm r}_{1, N-1})^\top, (\bs{y}^{\rm r}_{2, 0})^\top, \dots, (\bs{y}^{\rm r}_{\Tilde{T}, N-1})^\top]^\top$,
and we define the \gls{ris} response vector $\bs{\nu}(\bs{\phi}) := [\nu(\bs{\phi}, \bs{\phi}_0; \Tilde{\bs{\omega}}_1), \dots, \nu(\bs{\phi}, \bs{\phi}_0; \Tilde{\bs{\omega}}_{\Tilde{T}})]^\top$.
Then, the received signal can be expressed as
\begin{equation}\label{eq:ris_lin_mod}
    \bs{y}^{\rm r} = \sum_{l=1}^L \bar{\alpha}_l \bs{g}^{\rm r}(\bs{\eta}_l^{\rm r}) + \bs{\varepsilon} = \bs{G}^{\rm r} \bar{\bs{\alpha}} + \bs{\varepsilon}^{\rm r}
\end{equation}
\end{subequations}
where $\bs{\varepsilon}^{\rm r}$ is again zero mean with covariance $\frac{\sigma^2}{2}\bs{I}$,
$\bs{g}^{\rm r}_l := \bs{g}^{\rm r}(\bs{\eta}_l^{\rm r}) = \langle \bs{a}_{\rm u}^*(\bs{\theta}_l), \bs{f}\rangle \bs{\nu}(\bs{\phi}_l) \otimes \bs{d}(\bar{\tau}_l) \otimes \bs{a}_{\rm u}(\bs{\theta}_0)$, $\bs{G}^{\rm r} := [\bs{g}^{\rm r}_1, \dots, \bs{g}^{\rm r}_L]$, $\bs{\eta}_l^{\rm r} := [\bs{\theta}_l^\top, \bar{\tau}_l, \bs{\phi}_l^\top]^\top$, and $\bar{\bs{\alpha}} := [\bar{\alpha}_1,\dots,\bar{\alpha}_L]^\top$.


\section{Theoretical Performance Bounds}\label{sec:theoretical_analysis}
\noindent We derive performance bounds for channel parameter estimation, position estimation, and detection. First, we consider the problem of detection, conditioning on the positions of the targets, from which we develop an upper bound on the detection probability when following a greedy approach. This result can be seen in relation to the lower bound on the probability of correct detection for the \gls{omp} algorithm introduced in \cite{Emadi2018:OMP}, although we have found this lower bound to be pessimistic relative to what is practically achieveable.
Second, conditioning on detecting the correct number of targets, we derive the \gls{crlb} on channel parameters as well as the position of the targets. 
These two theoretical results can be seen as considering individual parts of a complete sensing system.


\subsection{Detection Probability}\label{subsec:detection_probability}
\subsubsection{General Case}
In this section, we present an upper bound on the detection probability that share similarities with that of \cite{Wymeersch2020:Detection} but is generalized to simultaneous detection of multiple targets in proximity. The fundamental idea is to formulate an oracle estimator relying on a greedy approach, similar to \gls{omp}, while perfectly estimating the geometrical channel parameters.
%
\begin{proposition}\label{prop:conditional_detection_probability}
    Let the received signal be on the form $\bs{y} = \sum_{l=1}^L \alpha_l \bs{g}_l + \bs{\varepsilon} = \bs{G} \bs{\alpha} + \bs{\varepsilon}$ where $\bs{\varepsilon} \sim \mathcal{C}\mathcal{N}(\bs{0}, \frac{\sigma^2}{2}\bs{I})$, and $\bs{G} = [\bs{g}_1, \dots, \bs{g}_L]$, and $\bs{\alpha} = [\alpha_1, \dots, \alpha_L]^\top$. Then, following optimal processing with a greedy algorithm yields that the detection probability of the $l$-th target for $l > 1$, and conditioned on the channel coefficients $\bs{\alpha}_{l:L}$, is
    \begin{equation}\label{eq:conditional_detection_probability}
        p_{{\rm d}, l}(\bs{\alpha}_{l:L}) = Q_1(\sqrt{\mu_l}, \sqrt{\gamma_{\rm th}}),
    \end{equation}
    where $Q_1(\cdot, \cdot)$ is the Marcum Q-function, $\gamma_{\rm th} := -2\log(p_{\rm fa})$ for a given false alarm probability, $p_{\rm fa}$, and non-centrality parameter $\mu_l := A_l \beta_l$ for $\beta_l := |\bs{g}_l^H(\bs{I} - \bs{P}_{l-1}) \bs{G}_{l:L}\bs{\alpha}_{l:L}|^2$, $A_l := \frac{4}{\sigma^2\bs{g}_l^H(\bs{I} - \bs{P}_{l-1}) \bs{g}_l}$, $\bs{P}_{l-1} := \bs{G}_{1:l-1} \bs{G}_{1:l-1}^\dagger$.
\end{proposition}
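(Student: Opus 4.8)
The plan is to realize the greedy detector as an oracle version of \gls{omp}. Assuming the geometric parameters are known and the first $l-1$ atoms $\bs{g}_1,\dots,\bs{g}_{l-1}$ have been correctly identified and removed, I would form the residual $\bs{r}_{l-1} = (\bs{I} - \bs{P}_{l-1})\bs{y}$, where $\bs{P}_{l-1} = \bs{G}_{1:l-1}\bs{G}_{1:l-1}^\dagger$ is the orthogonal projector onto $\mathrm{span}(\bs{g}_1,\dots,\bs{g}_{l-1})$, so that $\bs{I} - \bs{P}_{l-1}$ projects onto its orthogonal complement. Since the optimal test for the presence of the known atom $\bs{g}_l$ in additive Gaussian noise is a matched filter, I would take the decision statistic to be the normalized squared correlation of the residual against $\bs{g}_l$, namely $\Gamma_l = A_l \lvert \bs{g}_l^H \bs{r}_{l-1}\rvert^2$ with $A_l$ as defined in the statement, and declare a detection when $\Gamma_l > \gamma_{\rm th}$.

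The core computation is to characterize the distribution of $\bs{g}_l^H \bs{r}_{l-1}$. Using $\bs{y} = \bs{G}\bs{\alpha} + \bs{\varepsilon}$ together with the idempotency and Hermitian symmetry of $\bs{I} - \bs{P}_{l-1}$, and the key identity $(\bs{I} - \bs{P}_{l-1})\bs{g}_k = \bs{0}$ for $k \le l-1$, I would obtain
\begin{equation}
\bs{g}_l^H \bs{r}_{l-1} = \bs{g}_l^H(\bs{I} - \bs{P}_{l-1})\bs{G}_{l:L}\bs{\alpha}_{l:L} + \bs{g}_l^H(\bs{I} - \bs{P}_{l-1})\bs{\varepsilon}.
\end{equation}
Conditioned on $\bs{\alpha}_{l:L}$, this is a scalar complex Gaussian with mean $\bs{g}_l^H(\bs{I} - \bs{P}_{l-1})\bs{G}_{l:L}\bs{\alpha}_{l:L}$, whose squared modulus is exactly $\beta_l$, and total variance $\frac{\sigma^2}{2}\bs{g}_l^H(\bs{I} - \bs{P}_{l-1})\bs{g}_l$; the latter follows from $\E[\bs{\varepsilon}\bs{\varepsilon}^H] = \frac{\sigma^2}{2}\bs{I}$ and one further application of idempotency.

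It then follows that $\Gamma_l$ is a sum of squares of two unit-variance real Gaussians with nonzero means, i.e., a non-central chi-squared variable with two degrees of freedom and non-centrality parameter $\mu_l = A_l \beta_l$; indeed, the constant $A_l$ is precisely what rescales each quadrature component to unit variance, since $A_l = 1/(\varsigma^2/2)$ for the per-quadrature variance $\varsigma^2/2 = \frac{\sigma^2}{4}\bs{g}_l^H(\bs{I} - \bs{P}_{l-1})\bs{g}_l$. The survival function of this law is $P(\Gamma_l > \gamma_{\rm th}) = Q_1(\sqrt{\mu_l}, \sqrt{\gamma_{\rm th}})$ by the definition of the Marcum Q-function, which is \eqref{eq:conditional_detection_probability}. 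Finally, I would fix $\gamma_{\rm th}$ via the false-alarm constraint: under the null ($\beta_l = 0$) the statistic is central chi-squared with two degrees of freedom, so $p_{\rm fa} = Q_1(0, \sqrt{\gamma_{\rm th}}) = e^{-\gamma_{\rm th}/2}$, recovering $\gamma_{\rm th} = -2\log(p_{\rm fa})$.

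I expect the main obstacle to be conceptual rather than computational: precisely formalizing what ``optimal processing with a greedy algorithm'' means and justifying the oracle assumptions (exact knowledge of the atoms $\bs{g}_l$ and error-free removal of the first $l-1$ targets), so that the per-step detection problem reduces cleanly to matched filtering on the residual. Once that reduction is established, the mean/variance bookkeeping and the chi-squared-to-Marcum identification are routine.
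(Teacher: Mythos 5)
Your proposal is correct and follows essentially the same route as the paper: the paper's residual $\bs{y}-\bs{G}_{1:l-1}\hat{\bs{\alpha}}_{l,1:l-1}$ is exactly your $(\bs{I}-\bs{P}_{l-1})\bs{y}$, and the paper's statistic $\gamma_l = 2|\hat{\alpha}_l|^2/\mathrm{Var}[\hat{\alpha}_l]$ with $\hat{\alpha}_l = \bs{g}_l^H\bs{r}_l/\Vert\bs{g}_l\Vert^2$ coincides with your $\Gamma_l = A_l|\bs{g}_l^H\bs{r}_{l-1}|^2$ after cancellation of the $\Vert\bs{g}_l\Vert^4$ normalization. The mean/variance bookkeeping, the non-central $\chi^2$ identification, and the Marcum-Q conclusion all match the paper's argument.
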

\begin{proof}
    The proof is deferred to Appendix \ref{subsec:proof_prop1}.
\end{proof}

To summarize the detection performance across all false alarm probabilities, we will also use the \gls{auc} defined as
\begin{equation}\label{eq:conditional_AUC}
    {\rm AUC}_l(\bs{\alpha}_{l:L}) := \int_0^1 p_{{\rm d}, l}(\bs{\alpha}_{l:L}) {\rm d}p_{\rm fa},
\end{equation}
considering that the detection probability is a function of the false alarm probability. The \gls{auc} quantifies the detection performance across the range of false alarm probability, and it is lower bounded by $0.5$, corresponding to the case $p_{\rm d} = p_{\rm fa}$, and upper bounded by $1$, corresponding to the case $p_{\rm d} = 1$.

\Cref{prop:conditional_detection_probability} provides a way to compute the detection probability for both the non-\gls{ris} and \gls{ris} signals conditioned on the realization of the channel coefficients, denoted respectively by $p_{{\rm d}, l}^{\rm n}(\bs{\alpha}_{l:L})$ and $p_{{\rm d}, l}^{\rm r}(\bar{\bs{\alpha}}_{l:L})$, by considering the signal models \cref{eq:nonris_lin_mod,eq:ris_lin_mod}

By the probability of the union of not necessarily mutually exclusive events, the joint detection probability of the $l$-th target for $l > 1$, and conditioned on $\bs{\alpha}_{l:L}$, $\bar{\bs{\alpha}}_{l:L}$, is
\begin{equation}\label{eq:conditional_joint_detection_probability}
    \begin{split}
        p_{{\rm d}, l}^{\rm joint}(\bs{\alpha}_{l:L}, \bar{\bs{\alpha}}_{l:L}) &= p_{{\rm d}, l}^{\rm n}(\bs{\alpha}_{l:L}) + p_{{\rm d}, l}^{\rm r}(\bar{\bs{\alpha}}_{l:L}) \\
        &\qquad\qquad- p_{{\rm d}, l}^{\rm n}(\bs{\alpha}_{l:L}) p_{{\rm d}, l}^{\rm r}(\bar{\bs{\alpha}}_{l:L})
    \end{split}
\end{equation}
with false alarm probability $p_{\rm fa}^{\rm joint} = 2p_{\rm fa} - p_{\rm fa}^2$.

Taking the expectation over the channel coefficients with the Gaussian distribution assumption (cf.~Appendix~\ref{subsec:channel_parameters}), yields the result of the following corollary.
\begin{corollary}\label{corr:expected_detection_probability}
    With the same assumptions of \cref{prop:conditional_detection_probability}, assuming further that $\mathcal{R}_o$ is fixed, and $|\alpha_l|$ is Rayleigh distributed with scale parameter $\varsigma_{\alpha_l}$, then the marginal distribution for detection, i.e., the expected detection probability, is given by
    \begin{equation}\label{eq:expected_detection_probability}
        p_{{\rm d}, l} := \E_{\beta_l}[p_{{\rm d}, l}(\bs{\alpha}_{l:L})] = {\rm exp}\Big(\frac{{\rm log}(p_{\rm fa})}{A_l \zeta_l + 1}\Big),
    \end{equation}
    where $\zeta_l := \frac{1}{2}\bs{g}_l^H\big(\bs{I} - \bs{P}_{l-1}\big) \big(\sum_{i=l}^L \bs{g}_i\bs{g}_i^H \varsigma_{\alpha_i}^2 \big) \big(\bs{I} - \bs{P}_{l-1}\big) \bs{g}_l$. Moreover, the \gls{auc} is
    \begin{equation}\label{eq:expected_AUC}
        {\rm AUC}_l := \int_0^1 p_{{\rm d}, l} {\rm d}p_{\rm fa} = \frac{1 + A_l \zeta_l}{2 + A_l \zeta_l}.
    \end{equation}
\end{corollary}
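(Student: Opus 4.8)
The plan is to exploit the assumed statistics of the channel coefficients to identify the distribution of $\beta_l$, and then average the conditional detection probability of \cref{prop:conditional_detection_probability} in closed form. First I would write $\bs{v}_l := (\bs{I}-\bs{P}_{l-1})\bs{g}_l$, so that, since $\bs{I}-\bs{P}_{l-1}$ is a Hermitian idempotent projector, $\beta_l = \big|\bs{g}_l^H(\bs{I}-\bs{P}_{l-1})\bs{G}_{l:L}\bs{\alpha}_{l:L}\big|^2 = \big|\sum_{i=l}^L (\bs{v}_l^H \bs{g}_i)\,\alpha_i\big|^2$. Invoking the modeling assumption (Appendix~\ref{subsec:channel_parameters}) that the $\alpha_i$ are independent, zero-mean, circularly-symmetric complex Gaussian with $|\alpha_i|$ Rayleigh of scale $\varsigma_{\alpha_i}$, the inner sum is a linear combination of such variables and is therefore itself a zero-mean circularly-symmetric complex Gaussian scalar; hence $\beta_l$, its squared magnitude, is exponentially distributed. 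Its mean is $\E[\beta_l] = \sum_{i=l}^L \big|\bs{v}_l^H\bs{g}_i\big|^2\,\E[|\alpha_i|^2]$, and rewriting $\big|\bs{v}_l^H\bs{g}_i\big|^2 = \bs{v}_l^H \bs{g}_i\bs{g}_i^H \bs{v}_l$ together with the idempotency of $\bs{I}-\bs{P}_{l-1}$ identifies $\E[\beta_l]$ with a constant multiple of the quadratic form $\zeta_l$; with the normalization of Appendix~\ref{subsec:channel_parameters} this gives $\E[\mu_l]=A_l\,\E[\beta_l]=2A_l\zeta_l$.

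The second and central step is to average the Marcum Q-function over the now-exponential non-centrality parameter $\mu_l=A_l\beta_l$. I would establish the identity $\E_\mu[Q_1(\sqrt{\mu},\sqrt{\gamma})] = \exp(-\gamma/(2+\bar\mu))$ for $\mu$ exponential with mean $\bar\mu$, which I expect to be the main obstacle. The cleanest route is the probabilistic reading $Q_1(\sqrt{\mu},\sqrt{\gamma}) = \Pr(\Vert \bs{z}\Vert^2 > \gamma)$, where $\bs{z}$ is a real two-dimensional Gaussian with unit per-component variance and squared mean-length $\mu$: drawing the mean itself from the zero-mean Gaussian that makes $\mu$ exponential with mean $\bar\mu$ renders $\Vert\bs{z}\Vert^2$ exponential with mean $2+\bar\mu$, whose tail is exactly $\exp(-\gamma/(2+\bar\mu))$. (Alternatively one integrates the Bessel-function representation of $Q_1$ against the exponential density directly.) Substituting $\bar\mu = 2A_l\zeta_l$ and $\gamma_{\rm th} = -2\log(p_{\rm fa})$ collapses the exponent to $\log(p_{\rm fa})/(1+A_l\zeta_l)$, which is \cref{eq:expected_detection_probability}.

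Finally, for the \gls{auc} I would observe that the expected detection probability is just a power of the false-alarm probability: setting $\kappa := 1/(1+A_l\zeta_l)$ gives $p_{{\rm d},l} = \exp(\kappa\log p_{\rm fa}) = p_{\rm fa}^{\kappa}$. The integral in \cref{eq:expected_AUC} is then elementary, $\int_0^1 p_{\rm fa}^{\kappa}\,{\rm d}p_{\rm fa} = 1/(\kappa+1)$, and substituting $\kappa$ and simplifying yields $(1+A_l\zeta_l)/(2+A_l\zeta_l)$, as claimed. The only points requiring care are the circular-symmetry assumption on the $\alpha_i$ (needed for $\beta_l$ to be genuinely exponential rather than a more general quadratic form), the independence across $i$ (so the variance of the sum carries no cross terms), and the bookkeeping of the constant relating $\E[\beta_l]$ to $\zeta_l$, which fixes the factor in the final exponent.
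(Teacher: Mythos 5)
Your proposal is correct and follows the same skeleton as the paper's proof: identify the distribution of $\beta_l$ induced by the circularly-symmetric Gaussian channel coefficients, then average the Marcum Q-function of \cref{prop:conditional_detection_probability} over it. The one genuine difference is how the central integral is evaluated. The paper substitutes $x=\sqrt{\beta_l/\zeta_l}$ to reduce the expectation to $\int_0^\infty x\,Q_1(\sqrt{A_l\zeta_l}\,x,\sqrt{\gamma_{\rm th}})\,e^{-x^2/2}\,{\rm d}x$ and then cites a known closed form for this integral, whereas you derive the identity $\E_\mu[Q_1(\sqrt{\mu},\sqrt{\gamma})]=\exp(-\gamma/(2+\bar\mu))$ from scratch by reading $Q_1(\sqrt{\mu},\sqrt{\gamma})$ as the tail $\Pr(\Vert\bs{z}\Vert^2>\gamma)$ of a noncentral $\chi^2_2$ variable and marginalizing the mean of $\bs{z}$ over the Gaussian that makes $\mu$ exponential; this is self-contained, avoids the external table lookup, and makes transparent why the answer is again an exponential tail. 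You also carry out the \gls{auc} integral explicitly (the paper's appendix proof omits it), and your observation that $p_{{\rm d},l}=p_{\rm fa}^{\kappa}$ with $\kappa=1/(1+A_l\zeta_l)$ immediately gives $\int_0^1 p_{\rm fa}^{\kappa}\,{\rm d}p_{\rm fa}=(1+A_l\zeta_l)/(2+A_l\zeta_l)$, which is exactly \cref{eq:expected_AUC}. On the point you flagged as delicate: you land on $\E[\mu_l]=2A_l\zeta_l$, which is precisely the normalization the paper's own integral presupposes (it treats $\sqrt{\beta_l}$ as Rayleigh with squared scale $\zeta_l$, i.e., $\E[\beta_l]=2\zeta_l$), so your constant agrees with the paper's; just be aware that under the literal Rayleigh-scale convention of Appendix~\ref{subsec:channel_parameters} ($\E[|\alpha_i|^2]=2\varsigma_{\alpha_i}^2$) the $\tfrac12$ prefactor in $\zeta_l$ would yield $\E[\beta_l]=4\zeta_l$; this factor-of-two bookkeeping ambiguity is inherited from the paper, not introduced by you.
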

\begin{proof}
    The proof is given in Appendix \ref{subsec:proof_corr1}.
\end{proof}
As before, this result applies straightforwardly to both the non-\gls{ris} and \gls{ris} signals, with detection probability denoted as $p_{{\rm d}, l}^{\rm n}$ and $p_{{\rm d}, l}^{\rm r}$, respectively, and \gls{auc} denoted respectively by ${\rm AUC}_{{\rm d}, l}^{\rm n}$ and ${\rm AUC}_{{\rm d}, l}^{\rm r}$.
With the previous result at hand, we can explicitly express the expected joint detection probability.
\begin{proposition}\label{prop:expected_joint_detection_probability}
    Assuming that the stochastic channel coefficients $\bs{\alpha}_{l:L}$ and $\bar{\bs{\alpha}}_{l:L}$ are independent, the expected probability of detecting the $l$-th target by either the non-\gls{ris} or \gls{ris} signals is $p_{{\rm d}, l}^{\rm joint} = p_{{\rm d}, l}^{\rm n} + p_{{\rm d}, l}^{\rm r} - p_{{\rm d}, l}^{\rm n} p_{{\rm d}, l}^{\rm r}$ with false alarm probability $p_{\rm fa}^{\rm joint} = 2p_{\rm fa} - p_{\rm fa}^2$.
\end{proposition}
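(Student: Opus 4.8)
The plan is to derive the expected joint detection probability by taking the expectation of the conditional joint detection probability in \cref{eq:conditional_joint_detection_probability} over the two sets of channel coefficients $\bs{\alpha}_{l:L}$ and $\bar{\bs{\alpha}}_{l:L}$, and then to exploit their independence to factor the cross term. Applying $\E$ to both sides of \cref{eq:conditional_joint_detection_probability} and using linearity of expectation, the first two summands become the marginal detection probabilities $\E[p_{{\rm d}, l}^{\rm n}(\bs{\alpha}_{l:L})] = p_{{\rm d}, l}^{\rm n}$ and $\E[p_{{\rm d}, l}^{\rm r}(\bar{\bs{\alpha}}_{l:L})] = p_{{\rm d}, l}^{\rm r}$, which are exactly the quantities already computed in \cref{corr:expected_detection_probability}.

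The key step is the product term. Since $p_{{\rm d}, l}^{\rm n}(\bs{\alpha}_{l:L})$ is a deterministic function of $\bs{\alpha}_{l:L}$ alone, and $p_{{\rm d}, l}^{\rm r}(\bar{\bs{\alpha}}_{l:L})$ is a deterministic function of $\bar{\bs{\alpha}}_{l:L}$ alone, the assumed independence of $\bs{\alpha}_{l:L}$ and $\bar{\bs{\alpha}}_{l:L}$ transfers to these two random variables. I would therefore factor the expectation of their product as
\begin{equation}
    \E\big[p_{{\rm d}, l}^{\rm n}(\bs{\alpha}_{l:L})\, p_{{\rm d}, l}^{\rm r}(\bar{\bs{\alpha}}_{l:L})\big] = p_{{\rm d}, l}^{\rm n}\, p_{{\rm d}, l}^{\rm r},
\end{equation}
which immediately yields the claimed form $p_{{\rm d}, l}^{\rm joint} = p_{{\rm d}, l}^{\rm n} + p_{{\rm d}, l}^{\rm r} - p_{{\rm d}, l}^{\rm n} p_{{\rm d}, l}^{\rm r}$.

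Finally, the joint false alarm probability $p_{\rm fa}^{\rm joint} = 2p_{\rm fa} - p_{\rm fa}^2$ is inherited verbatim from the conditional analysis beneath \cref{eq:conditional_joint_detection_probability}: it is a deterministic function of the threshold $\gamma_{\rm th}$ that does not depend on the channel realizations, so it is unaffected by the expectation and passes through unchanged. I anticipate no genuine obstacle in this argument; the only substantive ingredient is the independence hypothesis, which is precisely what legitimizes the factorization of the cross term. Without it, the product expectation would not separate in general, and one would instead be forced to work with the joint law of $(\bs{\alpha}_{l:L}, \bar{\bs{\alpha}}_{l:L})$.
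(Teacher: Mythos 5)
Your proposal is correct and follows essentially the same route as the paper: both start from the conditional joint detection probability in \cref{eq:conditional_joint_detection_probability}, apply linearity for the two marginal terms, and use independence of $\bs{\alpha}_{l:L}$ and $\bar{\bs{\alpha}}_{l:L}$ to separate the cross term (the paper phrases this as an iterated expectation $\E_{\bar{\beta}_l}[\E_{\beta_l}[\cdot]]$, which is equivalent to your direct factorization of the product expectation). The remark on the false alarm probability also matches the paper's treatment.
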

\begin{proof}
    For the proof we refer to Appendix \ref{subsec:proof_prop2}.
\end{proof}
For the joint detection to be more effective than individual detections, $p_{{\rm d}, l}^{\rm joint}$ as a function of $p_{\rm fa}^{\rm joint}$, must exceed $p_{{\rm d}, l}^{\rm n}$ and $p_{{\rm d}, l}^{\rm r}$ as functions of $p_{\rm fa}$. This scenario occurs when the detection probabilities for \gls{ris} and non-\gls{ris} are similar. Conversely, if the \gls{ris} detection probability significantly surpasses the non-\gls{ris} detection probability, it is preferable to rely solely on the \gls{ris} signal, and vice versa.

In general, the projection matrices $\bs{P}_{l-1}$, and in turn the non-centrality parameters, $\mu_l$, and expected detection probabilities, $p_{{\rm d}, l}$, are not easy to quantify in terms of the signal and channel parameters, which motivates us to look into a representative case of a cluster with $3$ targets.


\subsubsection{Case study: Special case of three targets}
To give some insight into the derived detection probability, we consider the case of $L=3$, i.e., $\bs{y} = \sum_{l=1}^3 \bs{g}_l \alpha_l + \bs{\varepsilon}$ where $\bs{\varepsilon} \sim \mathcal{C}\mathcal{N}(\bs{0}, \frac{\sigma^2}{2}\bs{I})$ and $|\alpha_l|$ is Rayleigh distributed with scale parameter $\varsigma_{\alpha_l}$. This provides a simplification with sufficient generality to understand the full problem.\footnote{Considering the case of only two targets is not sufficient here: the detection probability of the second target in case of three targets depends both on the error made when subtracting the contribution of the first target, and the contribution of the third target to the signal. Such a case does not occur when considering only two targets.}

Before proceeding, we define two coherence concepts which will be central to the subsequent analysis.
\begin{definition}[Coherence]\label{def:coherence}
    The coherence $C(\bs{g}_l, \bs{g}_k) \in [0, 1]$ of two complex vectors $\bs{g}_l,\bs{g}_k\in\C^N$ is 
    \begin{equation}\label{eq:coherence}
        C(\bs{g}_l, \bs{g}_k) = \frac{|\langle \bs{g}_l, \bs{g}_k\rangle|^2}{\Vert \bs{g}_l\Vert^2 \Vert \bs{g}_k\Vert^2}.
    \end{equation}
\end{definition}
\begin{remark}
    By the Cauchy-Schwarz inequality, the coherence equals one if and only if $\bs{g}_l$ and $\bs{g}_k$ are linearly dependent, and the coherence is zero if and only if $\bs{g}_l$ and $\bs{g}_k$ are orthogonal.
\end{remark}
\begin{definition}[Generalized Coherence]\label{def:generalized_coherence}
    The generalized concept of coherence $C(\bs{g}_l, \bs{g}_k, \bs{g}_i) \in [0, 1]$ of three complex vectors $\bs{g}_l,\bs{g}_k,\bs{g}_i\in\C^N$ is
    \begin{equation}
        \Tilde{C}(\bs{g}_l, \bs{g}_k, \bs{g}_i) := \frac{\Vert \bs{g}_l \langle \bs{g}_i, \bs{g}_k\rangle - \bs{g}_k \langle \bs{g}_i, \bs{g}_l\rangle \Vert^2}{\Vert \bs{g}_l\Vert^2 \Vert \bs{g}_k\Vert^2 \Vert \bs{g}_i\Vert^2 (1 - C_{l,k})}.\label{eq:C_Tilde}
    \end{equation}
\end{definition}
For notational convenience, we define $C_{l,k} := C(\bs{g}_l, \bs{g}_k)$ as the coherence between the $l$-th and $k$-th targets, and $\Tilde{C} := \Tilde{C}(\bs{g}_1, \bs{g}_2, \bs{g}_3)$ as the generalized coherence between targets $1$, $2$, and $3$.

The non-centrality parameters, cf.~\cref{eq:conditional_detection_probability}, are
\begin{subequations}\label{eq:parameters_L3}
\begin{align}
    \mu_1 &= \frac{4 \Vert \bs{g}_1\Vert^2}{\sigma^2} \Big|\alpha_1 + \alpha_2\frac{\langle \bs{g}_1, \bs{g}_2\rangle}{\Vert \bs{g}_1\Vert^2} + \alpha_3\frac{\langle \bs{g}_1, \bs{g}_3\rangle}{\Vert \bs{g}_1\Vert^2}\Big|^2, \label{eq:parameter1_L3}\\
    \begin{split}\label{eq:parameter2_L3}
    \mu_2 &= \frac{4 \Vert \bs{g}_2 \Vert^2 (1 - C_{1,2})}{\sigma^2} \\
    &\quad \times \Big|\alpha_2 + \alpha_3 \frac{ \Vert \bs{g}_1\Vert^2\langle \bs{g}_2, \bs{g}_3\rangle - \langle \bs{g}_2, \bs{g}_1\rangle\langle \bs{g}_1, \bs{g}_3\rangle}{\Vert \bs{g}_1\Vert^2\Vert \bs{g}_2 \Vert^2(1 - C_{1,2})}\Big|^2,
    \end{split}\\
    \mu_3 &= \frac{4 \Vert \bs{g}_3\Vert^2 (1 - \Tilde{C})}{\sigma^2} |\alpha_3|^2,\label{eq:parameter3_L3}
\end{align}
\end{subequations}
where we notice that the generalized concept of coherence $\Tilde{C}$ has the same influence for the third target as the coherence $C_{1,2}$ has for the second target.
Accordingly, the expected detection probabilities are
\begin{subequations}\label{eq:expected_detection_L3}
\begin{align}
    p_{{\rm d}, 1} &= {\rm exp}\Big(\frac{\sigma^2{\rm log}(p_{\rm fa})}{4 \sum_{l=1}^3 \varsigma_{\alpha_l}^2 \Vert \bs{g}_l\Vert^2 C_{1,l} + \sigma^2}\Big), \label{eq:expected_detection_parameter1_L3}\\
    p_{{\rm d}, 2} &= {\rm exp}\Big(\frac{\sigma^2{\rm log}(p_{\rm fa})}{4 \big(\varsigma_{\alpha_2}^2 \Vert \bs{g}_2\Vert^2(1 - C_{1,2}) + \varsigma_{\alpha_3}^2 \Vert \bs{g}_3\Vert^2 \breve{C} \big) + \sigma^2}\Big), \label{eq:expected_detection_parameter2_L3}\\
    p_{{\rm d}, 3} &= {\rm exp}\Big(\frac{\sigma^2{\rm log}(p_{\rm fa})}{4 \varsigma_{\alpha_3}^2 \Vert \bs{g}_3\Vert^2 (1 - \Tilde{C}) + \sigma^2}\Big). \label{eq:expected_detection_parameter3_L3}
\end{align}
\end{subequations}
where $\breve{C} := \frac{\vert \Vert \bs{g}_1\Vert^2 \langle \bs{g}_3, \bs{g}_2\rangle - \langle \bs{g}_3, \bs{g}_1\rangle \langle \bs{g}_1, \bs{g}_2\rangle \vert^2}{\Vert \bs{g}_1\Vert^2\Vert \bs{g}_2\Vert^2\Vert \bs{g}_3\Vert^2 (1 - C_{1,2})} = \Tilde{C} - C_{1, 3}$,
and notice again the appearance of $\Tilde{C}$.
The non-centrality parameters and the expected detection probabilities applies to both the non-\gls{ris} and \gls{ris} signals using the signal models \cref{eq:nonris_lin_mod} and \cref{eq:ris_lin_mod}, respectively.

Some interpretation follows by considering some special cases. Consider first the case that $\bs{g}_2$ and $\bs{g}_3$ are linearly dependent (a special case is when targets two and three coincide), then \cref{eq:expected_detection_L3} collapses to the two target case, which we deduce by observing that $\breve{C} = 1 - C_{1,2}$ and $\Tilde{C} = 1$. Hence, the expected detection performance of the first two targets is not degraded, however, the probability of detecting the third target is equal to the false alarm probability, which is the worst case scenario for detection.
Another notable interpretation is that the non-centrality parameter for the second target, see~\cref{eq:parameter2_L3}, depends linearly on the complement of the coherence between the first and second targets, $1 - C_{1,2}$.

Considering alternatively a scenario where $\bs{g}_1, \bs{g}_2, \bs{g}_3$ are pair-wise orthogonal. Then, the expected detection probabilities, cf.~\cref{eq:expected_detection_L3}, simplify to
$p_{{\rm d}, l} = {\rm exp}\big(\frac{\sigma^2{\rm log}(p_{\rm fa})}{4 \varsigma_{\alpha_l}^2 \Vert \bs{g}_l\Vert^2 + \sigma^2}\big)$, i.e., the detection probabilities depends only on the expected receive \gls{snr} of the $l$-th path, $2\varsigma_{\alpha_l}^2 \Vert \bs{g}_l\Vert^2/\sigma^2$, and the false alarm probability.

Overall, the expected detection probabilities depend mainly on the coherence factors, with good detection performance for all targets if the coherence factors are close to zero, and degraded detection performance for the second and third targets, if the coherence factors are close to one. The coherence for non-\gls{ris} and \gls{ris} signals are
\begin{subequations}\label{eq:explicit_coherence}
\begin{align}
    C(\bs{g}_l^{\rm n}, \bs{g}^{\rm n}_k) &= \frac{|\langle \bs{a}_{\rm u}(\bs{\theta}_l), \bs{a}_{\rm u}(\bs{\theta}_k)\rangle|^2}{N_{\rm u}^2} \frac{|\langle \bs{d}(\tau_l), \bs{d}(\tau_k)\rangle|^2}{N^2},\label{eq:nonRIS_coherence}\\
    C(\bs{g}_l^{\rm r}, \bs{g}^{\rm r}_k) &= \frac{|\langle \bs{\nu}(\bs{\phi}_l), \bs{\nu}(\bs{\phi}_k)\rangle|^2}{\Vert \bs{\nu}(\bs{\phi}_l)\Vert^2 \Vert \bs{\nu}(\bs{\phi}_k)\Vert^2} \frac{|\langle \bs{d}(\bar{\tau}_l), \bs{d}(\bar{\tau}_k)\rangle|^2}{N^2}.\label{eq:RIS_coherence}
\end{align}
\end{subequations}
For the non-\gls{ris} signal, the coherence depends on the coherence between the delay response vectors and the coherence between the angle response vectors. On the other hand, for the \gls{ris} signal, the coherence depends on the coherence between the delay response vectors and the coherence between the \gls{ris} response vectors. This emphasizes the resolution capabilities of the \gls{ris} signal compared to the non-\gls{ris} signal: if the \gls{ris} response vectors decohere more quickly as the angle separation $\Vert \bs{\phi}_l - \bs{\phi}_k\Vert$ increases, than the array response vectors does as $\Vert \bs{\theta}_l - \bs{\theta}_k\Vert$ increases, then the \gls{ris} signal has improved angular resolution over the non-\gls{ris} signal.

In the case of two targets, the non-centrality parameter $\mu_2$, and in turn ${\rm AUC}_2$, depends only on the coherence and the receive SNR. We show this relation in \cref{subfig:AUC_coherence} and note that given a receive \gls{snr} of $30$ dB we can achieve an \gls{auc} that is nearly $1$ if the coherence is less than $0.9$, while for a receive \gls{snr} of $20$ dB the coherence should nearly be $0$ to achieve an \gls{auc} that is nearly $1$. We can also shed light on the required receive \gls{snr} to realize requirements of false alarm probability and detection probability as a function of the coherence, see \cref{subfig:SNR_coherence}: having a coherence of $0.1$ rather than $0.9$, we can realize the requirements that $p_{\rm fa}=0.001$ and $p_{{\rm d}, 2} = 0.99$, with $10$ dB less \gls{snr}. This clarifies the trade-off between non-\gls{ris} and \gls{ris}, considering that the \gls{ris} signal can have a lower coherence, but with a lower \gls{snr}.
\begin{figure}[t]
    \centering
    \begin{minipage}{0.7\linewidth}
        \centering
        \subfloat[]{\includegraphics[width=\linewidth, page=1]{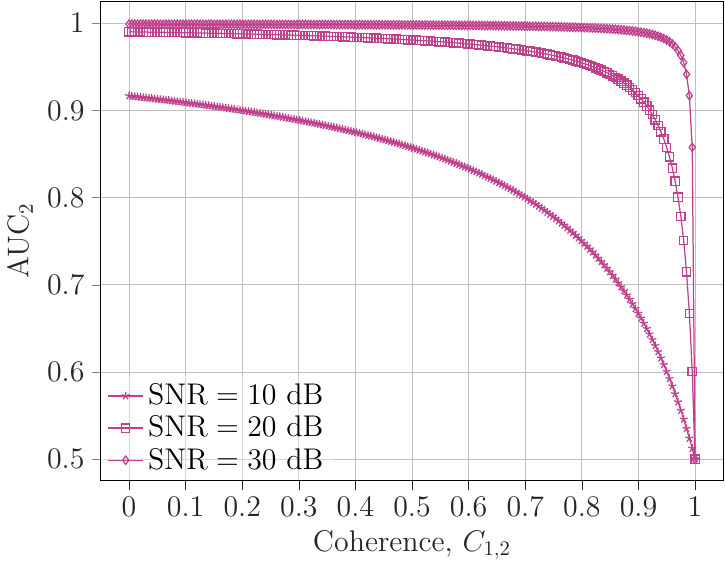}\label{subfig:AUC_coherence}}
    \end{minipage}\\ \vspace{1pt}
    \begin{minipage}{0.7\linewidth}
        \centering
        \subfloat[]{\includegraphics[width=\linewidth, page=2]{figures/P3Figs.pdf}\label{subfig:SNR_coherence}}
    \end{minipage}
    \vspace{1pt}
    \caption{Plot of (a) \gls{auc} against the coherence for varying receive \gls{snr} levels, and (b) the required receive \gls{snr} against coherence to realize false alarm and detection probability requirements, $p_{\rm fa} = 0.001$, $p_{{\rm d}, 2} = 0.9, 0.95, 0.99$.}
    \label{fig:theoretical_coherence_plots}
\end{figure}

We will see in the results that how quickly the \gls{ris} response vectors decohere depends both on the size of the \gls{ris}, $N_{\rm r}$, and the number of \gls{ris} phase profiles, $\Tilde{T}$. Additionally, we note that $\Vert \bs{g}_l^{\rm r}\Vert^2$, and thereby the receive \gls{snr} for the $l$-th path, scales proportionally with the number of \gls{ris} elements, but inverse proportionally to the size of the resolution region in the \gls{ris} angle parameter, i.e., $(\phi^{\text{az}}_{\text{max}} - \phi^{\text{az}}_{\text{min}})(\phi^{\text{el}}_{\text{max}}-\phi^{\text{el}}_{\text{min}})$.

\subsection{Fisher Analysis}\label{subsec:fisher_analysis}
\noindent Using Fisher analysis \cite{AbuShaban2018Bounds}, we will characterize the upper bound on the performance of the channel parameter estimator and position estimator. This is also referred to as the \gls{crlb} and gives us the lower bound of the covariance among unbiased estimators. Specifically, the statement asserts that for an unbiased estimator $\hat{\bs{z}}=\bs{b}(\bs{y})$ of the parameter $\bs{z}$ defined by a function $\bs{b}$ of the data $\bs{y}$, the variance is lower-bounded as $\Var[\hat{\bs{z}}] \geq \bs{F}^{-1}(\bs{z})$, where $\bs{F}(\bs{z})$ is the \gls{fim} and the inequality is in the positive semi-definite sense. Estimators achieving this lower bound are called unbiased and efficient estimators. Such a property can be found, e.g., in the maximum likelihood estimator which is asymptotically unbiased and efficient. \cite{Madsen2010:General}

\subsubsection{General Case}
We begin by deriving the \gls{fim} for the non-\gls{ris} and \gls{ris} signals. Conditioning on the fading parameters $\bs{\alpha}$ the channel is \gls{awgn}, and hence the conditional \gls{fim} for the non-\gls{ris} signal is \cite{AbuShaban2018Bounds}
\begin{subequations}\label{eq:FIM}
\begin{equation}\label{eq:FIM_SB}
    [\bs{F}^{\rm n}]_{i, j} = \frac{4}{\sigma^2}{\rm Re}\left\{\Big(\frac{\partial \bs{\mu}^{\rm n}}{\partial [\Tilde{\bs{\eta}}^{\rm n}]_i}\Big)^H \frac{\partial\bs{\mu}^{\rm n}}{\partial [\Tilde{\bs{\eta}}^{\rm n}]_j}\right\}
\end{equation}
where $\Tilde{\bs{\eta}}^{\rm n} = [\Re\{\bs{\alpha}\}^\top, \Im\{\bs{\alpha}\}^\top, (\bs{\eta}^{\rm n}_1)^\top, \dots, (\bs{\eta}^{\rm n}_L)^\top]^\top$, and $\bs{\mu}^{\rm n} = \bs{G}^{\rm n} \bs{\alpha}$. Conditioning on the fading parameters $\bar{\bs{\alpha}}=[\bar{\alpha}_1,\dots,\bar{\alpha}_L]^\top$, the conditional \gls{fim} for the \gls{ris} signal is
\begin{equation}\label{eq:FIM_DB}
    [\bs{F}^{\rm r}]_{i, j} = \frac{4}{\sigma^2}{\rm Re}\left\{\Big(\frac{\partial \bs{\mu}^{\rm r}}{\partial [\Tilde{\bs{\eta}}^{\rm r}]_i}\Big)^H \frac{\partial\bs{\mu}^{\rm r}}{\partial [\Tilde{\bs{\eta}}^{\rm r}]_j}\right\}
\end{equation}
\end{subequations}
where $\Tilde{\bs{\eta}}^{\rm r} = [\Re\{\bar{\bs{\alpha}}\}^\top, \Im\{\bar{\bs{\alpha}}\}^\top, (\bs{\eta}^{\rm r}_1)^\top, \dots, (\bs{\eta}^{\rm r}_L)^\top]^\top$, and $\bs{\mu}^{\rm r} = \bs{G}^{\rm r} \bar{\bs{\alpha}}$.
The first order derivatives required in the \glspl{fim} can be computed straightforwardly, see Appendix~\ref{subsec:fisher_information_derivations}.

With the \gls{fim} at hand, we can define the \gls{deb} and \gls{aeb} for the relevant parameters:
\begin{subequations}\label{eq:Fisher_bounds}
\begin{align}
    {\rm DEB}_{l}^{\rm n} &= \sqrt{[(\bs{F}^{\rm n})^{-1}]_{2L+l, 2L+l}},\\
    {\rm DEB}_{l}^{\rm r} &= \sqrt{[(\bs{F}^{\rm r})^{-1}]_{4L+l, 4L+l}},\\
    {\rm AEB}_{l}^{\rm n} &= \sqrt{[(\bs{F}^{\rm n})^{-1}]_{3L+l, 3L+l}+[(\bs{F}^{\rm n})^{-1}]_{4L+l, 4L+l}},\\
    {\rm AEB}_{l}^{\rm r} &= \sqrt{[(\bs{F}^{\rm r})^{-1}]_{5L+l, 5L+l}+[(\bs{F}^{\rm r})^{-1}]_{6L+l, 6L+l}}.
\end{align}
We can also characterize the position estimation capabilities with the \gls{sb} and \gls{db} signals through a coordinate system transformation. We define $[\bs{T}^{\rm n}]_{i,j} = \frac{\partial [\bs{\eta}^{\rm n}]_i}{\partial\bs{x}_j}$ and $[\bs{T}^{\rm r}]_{i,j} = \frac{\partial [\bs{\eta}^{\rm r}]_i}{\partial\bs{x}_j}$ as the Jacobian matrices of the transformation from position to channel parameters. The partial derivatives are provided in \cref{eq:Fisher_Jacobian}.
Then, the \glspl{fim} in the Euclidean coordinates are $\bs{F}_{\rm euc}^{\rm n} = (\bs{T}^{\rm n})^\top ([(\bs{F}^{\rm n})^{-1}]_{2L:,2L:})^{-1} \bs{T}^{\rm n}$ and $\bs{F}_{\rm euc}^{\rm r} = (\bs{T}^{\rm r})^\top ([(\bs{F}^{\rm r})^{-1}]_{4L:,4L:})^{-1} \bs{T}^{\rm r}$.\footnote{We treat the $\bs{\theta}_l$ angles as nuisance parameters with the \gls{ris} signal. This is justified by observing that using this signal to estimate the angles of departure at the \gls{ue} when we use just a single precoder will result in a highly inaccurate estimate. Further justification follows from observing through numerical evaluations that the Fisher information of the $\bs{\theta}_l$ parameters is vanishingly small.\label{footnote:theta}} We can define the \gls{peb} as
\begin{align}
    {\rm PEB}^{\rm n}_l &= \sqrt{{\rm Tr}([(\bs{F}_{\rm euc}^{\rm n})^{-1}]_{1+3(l-1):3l,1+3(l-1):3l})},\\
    {\rm PEB}^{\rm r}_l &= \sqrt{{\rm Tr}([(\bs{F}_{\rm euc}^{\rm r})^{-1}]_{1+3(l-1):3l,1+3(l-1):3l})}.
\end{align}
If we assume that the data association between the \gls{sb} and \gls{db} estimator results is known, then the joint \gls{fim} can be formulated as $\bs{F} = {\rm bdiag}(\bs{F}^{\rm n}, \bs{F}^{\rm r})$ where ${\rm bdiag}$ denotes block-diagonal matrix,
and as a consequence, in Euclidean coordinates the non-\gls{ris} and \gls{ris} Fisher information is simply aggregated: $\bs{F}_{\rm euc} = \bs{F}_{\rm euc}^{\rm n} + \bs{F}_{\rm euc}^{\rm r}$.
The \gls{peb} for the joint sensing is
\begin{equation}\label{eq:peb}
    {\rm PEB}_l = \sqrt{{\rm Tr}([\bs{F}^{-1}_{\rm euc}]_{1+3(l-1):3l,1+3(l-1):3l})}.
\end{equation}
\end{subequations}
We can compute explicit expressions for the equivalent Fisher information for individual parameters. Full derivations are cumbersome, however, we can gain insight into the full problem by considering a simplified but sufficiently complicated special case.

\subsubsection{Case study: Special case of two targets}


Consider the special case of two targets, i.e., $L=2$. Beginning with the non-\gls{ris} signal, we assume for simplicity that $\theta_1^{\rm el}$, $\theta_2^{\rm el}$, $\tau_1$, and $\tau_2$ are known, such that the unknowns are $\Re\{\alpha_1\}$, $\Re\{\alpha_2\}$, $\Im\{\alpha_1\}$, $\Im\{\alpha_2\}$, $\theta_1^{\rm az}$, and $\theta_2^{\rm az}$. We can now focus on the equivalent Fisher information for $\theta_1^{\rm az}$ which provides the necessary understanding to interpret the information loss due to the unknown channel coefficients and the interfering second target:
%
\begin{equation}\label{eq:fisher_finaleq}
    F({\rm n}, \theta_1^{\rm az}) := \frac{4}{\sigma^2} \bigg(|\alpha_1|^2 \Vert \bs{g}_{\theta_1^{\rm az}}^{\rm n}\Vert^2 \Big(1 - {\rm IL}_{\bs{\alpha}}\Big) - {\rm IL}_{{\theta}_2^{\rm az}}\bigg),
\end{equation}
where ${\rm IL}_{\bs{\alpha}}$ is the information loss due to the lack of knowledge about the channel coefficients
\begin{equation*}
    {\rm IL}_{\bs{\alpha}} := \frac{\Vert \bs{x}_{\theta_1^{\rm az}}^{\rm n} \Vert^2}{\Vert \bs{g}_1^{\rm n}\Vert^2 \Vert \bs{g}_2^{\rm n}\Vert^2 \Vert \bs{g}_{\theta_1^{\rm az}}^{\rm n}\Vert^2 (1 - C_{1,2}^{\rm n})} = \Tilde{C}(\bs{g}_1^{\rm n}, \bs{g}_2^{\rm n}, \bs{g}_{\theta_1^{\rm az}}^{\rm n}),
\end{equation*}
and ${\rm IL}_{{\theta}_2^{\rm az}}$ is the information loss due to the lack of knowledge of the azimuth angle of the second target
\begin{equation*}
    {\rm IL}_{{\theta}_2^{\rm az}} := \frac{\Re\Big\{\alpha_1^*\alpha_2 \Big(\langle \bs{g}_{\theta_1^{\rm az}}^{\rm n}, \bs{g}_{\theta_2^{\rm az}}^{\rm n}\rangle - \frac{\langle \bs{x}_{\theta_1^{\rm az}}^{\rm n}, \bs{x}_{\theta_2^{\rm az}}^{\rm n} \rangle}{\Vert \bs{g}_1^{\rm n}\Vert^2 \Vert \bs{g}_2^{\rm n}\Vert^2 (1 - C_{1,2}^{\rm n})}\Big)\Big\}^2}{|\alpha_2|^2 \Vert \bs{g}_{\theta_2^{\rm az}}^{\rm n}\Vert^2 \Big(1 - \Tilde{C}(\bs{g}_1^{\rm n}, \bs{g}_2^{\rm n}, \bs{g}_{\theta_2^{\rm az}}^{\rm n})\Big)},
\end{equation*}
where we have defined
$\bs{x}_{\theta_1^{\rm az}}^{\rm n} := \bs{g}_2^{\rm n} \langle \bs{g}_1^{\rm n}, \bs{g}_{\theta_1^{\rm az}}^{\rm n}\rangle - \bs{g}_1^{\rm n} \langle \bs{g}_2^{\rm n}, \bs{g}_{\theta_1^{\rm az}}^{\rm n}\rangle$,    $\bs{x}_{\theta_2^{\rm az}}^{\rm n} := \bs{g}_2^{\rm n} \langle \bs{g}_1^{\rm n}, \bs{g}_{\theta_2^{\rm az}}^{\rm n}\rangle - \bs{g}_1^{\rm n} \langle \bs{g}_2^{\rm n}, \bs{g}_{\theta_2^{\rm az}}^{\rm n}\rangle$,
$\bs{g}_{\theta_1^{\rm az}}^{\rm n} := \frac{\partial \bs{g}_1^{\rm n}}{\partial \theta_1^{\rm az}}$, and $\bs{g}_{\theta_2^{\rm az}}^{\rm n} := \frac{\partial \bs{g}_2^{\rm n}}{\partial \theta_2^{\rm az}}$ (see Appendix~\ref{subsec:fisher_information_derivations}).

We observe that if $\theta_1^{\rm az} = \theta_2^{\rm az}$, then $F({\rm n}, \theta_1^{\rm az}) = 0$. On the other hand, if $C(\bs{g}_1^{\rm n}, \bs{g}_2^{\rm n})$, $C(\bs{g}_1^{\rm n}, \bs{g}_{\theta_2^{\rm az}}^{\rm n})$, $C(\bs{g}_2^{\rm n}, \bs{g}_{\theta_1^{\rm az}}^{\rm n})$, and $C(\bs{g}_{\theta_1^{\rm az}}^{\rm n}, \bs{g}_{\theta_2^{\rm az}}^{\rm n})$
are all equal to zero, then the equivalent Fisher information is maximized as $F({\rm n}, \theta_1^{\rm az}) = \frac{4|\alpha_1|^2}{\sigma^2} \Vert \bs{g}_{\theta_1^{\rm az}}^{\rm n}\Vert^2 \big(1 - C(\bs{g}_1^{\rm n}, \bs{g}_{\theta_1^{\rm az}}^{\rm n})\big)$, i.e., there is no loss of information of the first target due to the second target. Inspecting further these inner products, it is derived that this condition comes down to $C(\bs{a}_{\rm u}(\bs{\theta}_1), \bs{a}_{\rm u}(\bs{\theta}_2))$, $C(\bs{a}_{\rm u}(\bs{\theta}_1), \frac{\partial\bs{a}_{\rm u}(\bs{\theta}_2)}{\partial \theta_2^{\rm az}})$, $C(\bs{a}_{\rm u}(\bs{\theta}_2), \frac{\partial\bs{a}_{\rm u}(\bs{\theta}_1)}{\partial \theta_1^{\rm az}})$, and $C(\frac{\partial \bs{a}_{\rm u}(\bs{\theta}_1)}{\partial \theta_1^{\rm az}}, \frac{\partial\bs{a}_{\rm u}(\bs{\theta}_2)}{\partial \theta_2^{\rm az}})$
being equal to zero. Generally speaking these coherences decrease as $|\theta_2^{\rm az}-\theta_1^{\rm az}|$ increases, and the rate at which this decrease occurs depends on $N_{\rm u}^{\rm az}$ \cite{Emadi2018:OMP}.

Considering now the \gls{ris} signal and assuming  that $\theta_1^{\rm az}$, $\theta_2^{\rm az}$, $\theta_1^{\rm el}$, $\theta_2^{\rm el}$, $\tau_1$, $\tau_2$, $\phi_1^{\rm el}$, and $\phi_2^{\rm el}$ are known, and the unknowns are $\Re\{\alpha_1\}$, $\Re\{\alpha_2\}$, $\Im\{\alpha_1\}$, $\Im\{\alpha_2\}$, $\phi_1^{\rm az}$, and $\phi_2^{\rm az}$. Reusing again \cref{eq:fisher_finaleq}, we find that the equivalent Fisher information for $\phi_1^{\rm az}$ with the \gls{ris} signal is $F({\rm r}, \phi_1^{\rm az})$, replacing in \cref{eq:fisher_finaleq} all superscript ${\rm n}$ by ${\rm r}$, all $\theta_1^{\rm az}$ by $\phi_1^{\rm az}$, and all $\alpha_1,\alpha_2$ by $\bar{\alpha}_1,\bar{\alpha}_2$. As before, we observe that when $\phi_1^{\rm az} = \phi_2^{\rm az}$, then $F({\rm r}, \phi_1^{\rm az}) = 0$, but if $C(\bs{\nu}(\bs{\phi}_1), \bs{\nu}(\bs{\phi}_2))$, $C(\bs{\nu}(\bs{\phi}_1), \frac{\partial\bs{\nu}(\bs{\phi}_2)}{\partial \phi_2^{\rm az}})$, $C(\bs{\nu}(\bs{\phi}_2), \frac{\partial\bs{\nu}(\bs{\phi}_1)}{\partial \phi_1^{\rm az}})$, and $C(\frac{\partial \bs{\nu}(\bs{\phi}_1)}{\partial \phi_1^{\rm az}}, \frac{\partial\bs{\nu}(\bs{\phi}_2)}{\partial \phi_2^{\rm az}})$
are all equal to zero, then the Fisher information is maximized.

\begin{remark}
We have seen that the coherence plays a central role for both the \gls{crlb} and the detection probability. For the detection probability, it is the coherences between vectors $\bs{g}_l$, $l=1,\dots,L$, which appear, while for the \gls{crlb} it is also the coherences between vectors $\bs{g}_l$ and their partial derivatives.
\end{remark}

\section{Channel and Position Estimation}\label{sec:channel_estimation}
\noindent With the formulations \cref{eq:nonris_lin_mod,eq:ris_lin_mod}, we can in parallel apply channel estimation techniques such as \gls{em} \gls{ml} \cite{Spagnolini2018:Statistical}, compressive sensing \cite{Malioutov2005:Sparse,Tropp2007:OMP}, or beamspace \gls{music} \cite{Guo2017Beamspace}, followed by a data association step, and subsequent position estimation.
We begin by briefly recalling the well-studied channel parameter estimation technique \gls{omp}

\subsection{Channel parameter estimation}

\subsubsection{OMP}
Consider an observed signal $\bs{y} = \bs{G} \bs{\alpha} + \bs{\varepsilon}$, which can be either the non-\gls{ris} or \gls{ris} signal \cref{eq:nonris_lin_mod,eq:ris_lin_mod}. To apply the \gls{omp} method, we construct a redundant dictionary $\bs{\Psi} = [\bs{g}_1, \dots, \bs{g}_M]$ where $M$ is a redundancy parameter, and $\bs{g}_i = \bs{g}^{\rm n}(\bs{\eta}^{\rm n}_i)$ with $\bs{\eta}^{\rm n}_i$ in the resolution region for the non-\gls{ris} signal, while $\bs{g}_i = \bs{g}^{\rm r}(\bs{\eta}^{\rm r}_i)$ with $\bs{\eta}^{\rm r}_i$ in the resolution region for the \gls{ris} signal. Then, we search for a sparse representation $\bs{v} \in \C^{M}$ such that $\Vert \bs{y} - \bs{\Psi} \bs{v}\Vert_2$ is minimized by initializing $\bs{v}$ as the zero vector and then iteratively updating the support set. Specifically, the iterative update is found by maximizing the \gls{omp} objective $|\langle \bs{g}_i, \bs{r} \rangle|$ with respect to index $i$, where $\bs{r}$ is the residual.
The algorithm is summarized below \cite{Tropp2007:OMP}:
\begin{itemize}
    \item Let $\bs{r} = \bs{y}$ denote the residual, initialize $\bs{v} = \bs{0}$, and let $\Lambda = \emptyset$ denote the support set of $\bs{v}$.
    \item While $\Vert \bs{r} \Vert_2 \geq {\rm residual}_{\rm th}$ do:
    \begin{enumerate}
        \item $I \gets \argmax_{i=1,\dots,M} |\langle \bs{g}_i, \bs{r}\rangle|$.
        \item $\Lambda \gets \Lambda \cup \{I\}$.
        \item $[\bs{v}]_\Lambda \gets [\bs{\Psi}]_\Lambda^\dagger \bs{y}$.~\footnote{The notation $[\bs{\Psi}]_\Lambda$ refers to the sub-matrix of $\bs{\Psi}$ containing the columns with indices in the index set $\Lambda$. Similarly, $[\bs{v}]_\Lambda$ is the sub-vector of $\bs{v}$ containing the elements with indices in the index set $\Lambda$.}
        \item $\bs{r} \gets \bs{y} - [\bs{\Psi}]_\Lambda [\bs{v}]_\Lambda$.
    \end{enumerate}
\end{itemize}
The support set $\Lambda$ specifies a set of estimated channel parameters $\{\bs{\eta}^{\rm n}_i\}_{i\in\Lambda}$ or $\{\bs{\eta}^{\rm r}_i\}_{i\in\Lambda}$ in case of non-\gls{ris} or \gls{ris}, respectively. The threshold parameter ${\rm residual}_{\rm th} > 0$ specifies the trade-off between detection probability and false alarm probability, and can be chosen based on the noise variance, specifically ${\rm residual}^2_{\rm th}$ should be larger than $\sigma^2N \Tilde{T} N_{\rm u}/2$ such that the algorithm terminates when the residual only contains noise. We let ${\rm residual}_{\rm th}^{\rm n}$ and ${\rm residual}_{\rm th}^{\rm r}$ denote the thresholds for the non-\gls{ris} and \gls{ris} signals, respectively.

\subsection{Data Association}
\noindent Once we have estimated the non-\gls{ris} and \gls{ris} channel parameters,
the data association task is to find pairs of non-\gls{ris} and \gls{ris} channel parameters which associate with the same target. We can compare the measurements in the domain of the \gls{aoa} at the \gls{ris}, by noting that the non-\gls{ris} parameters uniquely identify a position which subsequently uniquely identifies the \gls{ris} channel parameters. Specifically, from the non-\gls{ris} parameter estimate $\hat{\bs{\eta}}_i^{\rm n}$ we can uniquely find the estimate of the position of the target $\hat{\bs{c}}_i^{\rm n} = \bs{h}_{\rm n}^{-1}(\hat{\bs{\eta}}_i^{\rm n})$ from which we can compute the \gls{aoa} at the \gls{ris} as $\hat{\bs{\phi}}_i^{\rm n} = \bs{\phi}(\hat{\bs{c}}_i^{\rm n})$, cf. Appendix~\ref{subsec:channel_parameters}.

With this in mind, to solve the data association problem, we define the cost matrix
\begin{equation}
    \begin{split}
        \bs{C}_{i, k} = &\min \big( ( [\hat{\bs{\phi}}_i^{\rm n}]_1 - [\hat{\bs{\phi}}_k]_1 )^2, (\lvert[\hat{\bs{\phi}}_i^{\rm n}]_1 - [\hat{\bs{\phi}}_k]_1\rvert - 2\pi )^2 \big)\\
        &+ \min \big( ( [\hat{\bs{\phi}}_i^{\rm n}]_2 - [\hat{\bs{\phi}}_k]_2 )^2, (\lvert[\hat{\bs{\phi}}_i^{\rm n}]_2 - [\hat{\bs{\phi}}_k]_2\rvert - \pi )^2 \big),
    \end{split}
\end{equation}
where $i=1,\dots,\hat{L}_{\rm n}$ and $k=1,\dots,\hat{L}_{\rm r}$, $\hat{L}_{\rm n}$ and $\hat{L}_{\rm r}$ are the estimated number of targets with the non-\gls{ris} and \gls{ris} signals, respectively, and $\hat{\bs{\phi}}_k$ are the estimated \glspl{aoa} at the \gls{ris} with the \gls{ris} signal. Let $\bs{X}\in\{0,1\}^{\hat{L}_{\rm n} \times \hat{L}_{\rm r}}$ be the boolean assignment matrix such that $\bs{X}_{i,k} = 1$ if the $i$-th non-\gls{ris} estimate is associated with the $k$-th \gls{ris} estimate and $\bs{X}_{i,k} = 0$ otherwise. When $\hat{L}_{\rm r} \geq \hat{L}_{\rm n}$, to solve the data association problem is to solve the optimization problem
\begin{align}
    \operatorname{minimize}~~&~ \text{Tr}(\bs{X}^\top \bs{C})\\
    \text{subject to}~~&~\bs{X}_{i,k}\in\{0,1\},~\forall i,k,\nonumber\\
    &\sum_i \bs{X}_{i,k} \leq 1,~\forall k, \:~~ \sum_k \bs{X}_{i,k} = 1,~\forall i. \notag
\end{align}
The constraints mean that each \gls{ris} detection is associated with only one non-\gls{ris} detection and vice versa. Now, if more detection are made with the \gls{ris} signal (resp. non-\gls{ris} signal), the targets that are not associated with a non-\gls{ris} detection (resp. \gls{ris} detection), are kept and can be used for subsequent positioning, although with less precision.
The estimated number of targets is $\hat{L} = {\rm max}(\hat{L}_{\rm n}, \hat{L}_{\rm r})$.

\subsection{Position Estimation}
\noindent By the definition of the channel parameters, as detailed in Appendix~\ref{subsec:channel_parameters}, the mapping $\bs{h}$ from Euclidean space to the channel parameter space is known.
We formulate the weighted non-linear least squares optimization problem as
\begin{equation}\label{eq:position_optimization}
    \hat{\bs{c}} = \argmin_{\bs{c}\in\mathcal{D}}~~\frac{1}{2} (\bs{h}(\bs{c}) - \hat{\bs{\eta}})^\top \bs{F}_{\rm efim} (\bs{h}(\bs{c}) - \hat{\bs{\eta}}),
\end{equation}
where \resizebox{0.88\linewidth}{!}{$\bs{F}_{\rm efim} = {\rm bdiag}(([(\bs{F}^{\rm n})^{-1}]_{2L:,2L:})^{-1}, ([(\bs{F}^{\rm r})^{-1}]_{4L:,4L:})^{-1})$} is the \gls{efim} evaluated at $\hat{\bs{\eta}}=[\hat{\tau}, \hat{\bar{\tau}}, \hat{\bs{\theta}}^\top, \hat{\bs{\phi}}^\top]^\top$.
The partial derivatives required in the Jacobian is provided in \cref{eq:Fisher_Jacobian}, and any gradient-based minimization method can solve this numerically.



\section{Numerical Experiments}\label{sec:numerical_experiments}
\noindent In this section, the high resolution sensing capabilities of \gls{ris} is evaluated through numerical experiments. First we specify the setting, followed by a presentation of the key performance indicators. Then, the working principle is illustrated for an exemplary scenario, followed by a study of the coherence and the theoretical bounds using the results from \cref{sec:theoretical_analysis}. Finally, the performance of the sensing algorithm outlined in \cref{sec:channel_estimation} with and without the \gls{ris} is analyzed.

\subsection{Settings}\label{subsec:settings}
\begin{table}[t]
    \centering
    \caption{Scenario \& hyperparameter settings}\label{tab:params}
    \begin{tabular}{ccc}
        \toprule
        \textbf{Parameter}	& \textbf{Value} & \textbf{Description} \\ \midrule
        $\bs{s}$ & [0, 0, 0, 0, 0, 0] & \gls{ue} state\\
        $\bs{s}_{\rm r}$ & [3, 5, 6, 0, 0, 0] & \gls{ris} state\\
        $f_c$ & 15 GHz & Carrier frequency\\
        $\lambda$ & 2 cm & Wavelength\\
        $W$ & 9 MHz & Bandwidth\\
        $\Delta_f$ & 120 kHz & Subcarrier spacing\\
        $N$ & 75 & Number of subcarriers\\
        $E_{\rm s} W T N_{\rm u}$ & 65 dBm & Total transmission energy\\
        $N_0$ & $-166$ dBm & Noise power spectral density\\
        $N_{\rm r}$ & 1225 $(35 \times 35)$ & RIS array size\\
        $N_{\rm u}$ & 4 $(2 \times 2)$ & UE array size\\
        $T$ & 50 & \gls{ofdm} symbols\\
        ${\rm ref}_{\rm th}$ & $0.3$ & Reference threshold\\ 
        \bottomrule
        \vspace{1pt}
    \end{tabular}
\end{table}
\begin{figure}[t]
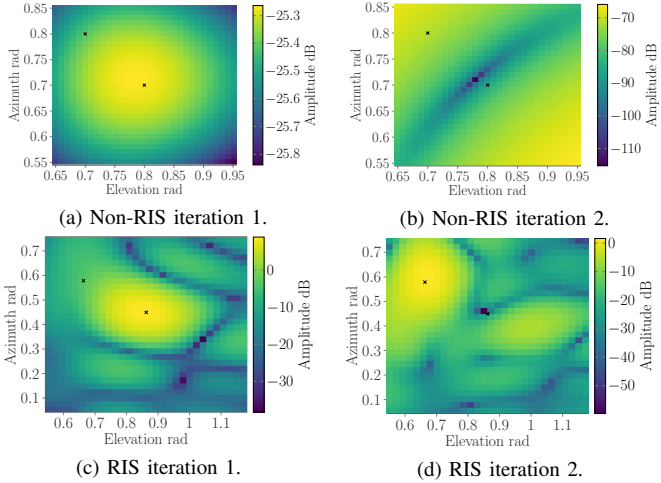

    \centering
    \begin{minipage}{0.23\textwidth}
        \subfloat[Non-\gls{ris} iteration 1.]{\includegraphics[width=\linewidth, page=3]{figures/P3Figs.pdf}\label{subfig:nonRIS1_el_az}}
    \end{minipage}
    ~
    \begin{minipage}{0.23\textwidth}
        \subfloat[Non-\gls{ris} iteration 2.]{\includegraphics[width=\linewidth, page=4]{figures/P3Figs.pdf}\label{subfig:nonRIS2_el_az}}
    \end{minipage}
    \\\vspace{1pt}
    \begin{minipage}{0.23\textwidth}
        \subfloat[\gls{ris} iteration 1.]{\includegraphics[width=\linewidth, page=5]{figures/P3Figs.pdf}\label{subfig:RIS1_el_az}}
    \end{minipage}
    ~
    \begin{minipage}{0.23\textwidth}
        \subfloat[\gls{ris} iteration 2.]{\includegraphics[width=\linewidth, page=6]{figures/P3Figs.pdf}\label{subfig:RIS2_el_az}}
    \end{minipage}
    \vspace{1pt}
    \caption{Non-\gls{ris} and \gls{ris} signal \gls{omp} objective, $|\langle \bs{g}_i, \bs{r}\rangle|$. The black crosses indicate the true positions of the \glspl{sp}.}
    \label{fig:OMP}
\end{figure}
\begin{figure}[t]
    \centering
    \begin{minipage}{0.7\linewidth}
        \centering
        \includegraphics[width=\linewidth, page=7]{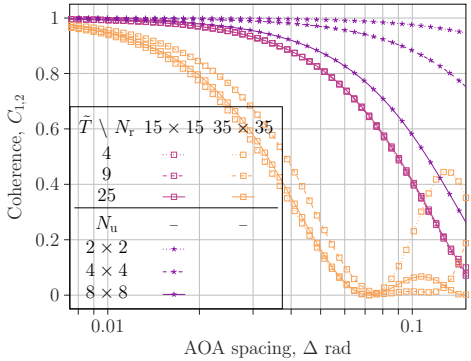}
    \end{minipage}
    \caption{Plot of coherence against the spacing of the targets in the \gls{aoa} at the \gls{ue} for varying \gls{ris} specifications, $\Tilde{T}$ and $N_{\rm r}$, and \gls{ue} specifications, $N_{\rm u}$.}
    \label{fig:coherence_plots}
\end{figure}
\begin{figure}[t]
    \centering
    \includegraphics[width=0.7\linewidth, page=8]{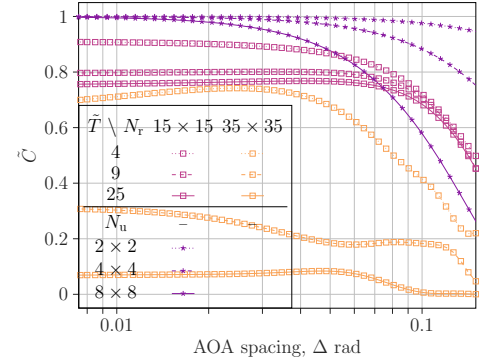}
    \caption{Plot of $\Tilde{C}$ against the spacing between the targets in the \gls{aoa} at the \gls{ue}.}
    \label{fig:TildeC}
\end{figure}
\begin{figure}[t]
    \centering
    \includegraphics[width=0.7\linewidth, page=9]{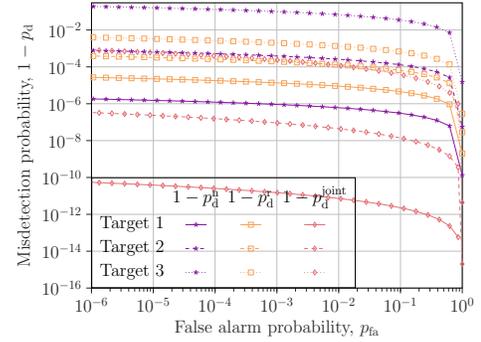}
    \caption{Expected misdetection probability against the false alarm probability on a log-log scale.}
    \label{fig:Epmd_ROC}
\end{figure}

\noindent In the simulations, we let the bandwidth be only $W=9~\rm{MHz}$ and the number of \gls{ue} antennas be just $N_{\rm u}=2\times 2$. These settings are chosen as we are interested in \glspl{ue} with limited sensing resolution. The transmission energy is set to $E_{\rm s} W T N_{\rm u} = 65~\rm{dBm}$, $E_{\rm s}$ is the energy per subcarrier, cf.~Appendix~\ref{subsec:channel_parameters}, with a noise power spectral density of $N_0=-166~\rm{dBm}$, $\sigma^2 = N_0 W$, consistent with the specifications used in \cite{Hyowon2023:Bounce}.
The size of the \gls{ris} is set to $N_{\rm r}=35\times 35$ with number of \gls{ofdm} symbols set to $T=50$ which results in $\Tilde{T}=25$ different \gls{ris} configurations in the beam sweep.
Settings for the numerical experiments are summarized in \cref{tab:params}.

We will consider an \gls{sp} cluster of up to three targets with channel parameters $\bs{\eta}_1^{\rm n} = [\tau_1, \theta_1^{\rm az}, \theta_1^{\rm el}] = [120, 0.7, 0.8]$, $\bs{\eta}_1^{\rm n} = [\tau_1, \theta_1^{\rm az} + \Delta, \theta_1^{\rm el} - \Delta]$, and $\bs{\eta}_3^{\rm n} = [\tau_1, \theta_1^{\rm az} - \Delta, \theta_1^{\rm el} + \Delta]$ where the delay is expressed in nanoseconds (ns), the angles are expressed in radians (rad), and $\Delta > 0$ rad is the spacing between the targets in the \gls{aoa} at the \gls{ue}.
The expected radar cross sections are $\sigma^2_{\rm rcs, 1} = 50~{\rm m}^2$, $\sigma^2_{\rm rcs, 2} = 5.0~{\rm m}^2$, and $\sigma^2_{\rm rcs, 3} = 0.5~{\rm m}^2$, cf.~Appendix~\ref{subsec:channel_parameters}. Hence, on average, the signal received from target one is the strongest target with significantly weaker signals from targets two and three.

\subsection{Performance Metrics}\label{subsec:performance_metrics}
\noindent We evaluate the performance with the empirical expected \gls{auc} and the \gls{gospa} metric, see \cite{Rahmathullah2017:GOSPA} for details. For the \gls{gospa} metric, we use the $2$-norm as the pairwise distance, let the maximum allowable localization error be fixed as $5$ m, and choose the cardinality penalty normalization of $2$ as recommended \cite{Rahmathullah2017:GOSPA}.
We use ${\rm OSPA}^{\rm n}$, ${\rm OSPA}^{\rm r}$, and ${\rm OSPA}^{\rm joint}$ to denote the \gls{ospa} metric using the non-\gls{ris} signal, the \gls{ris} signal, and the combination of the two, respectively.

We consider, the probability of detecting $l \leq L$ targets conditioned on the channel parameters defined as $p_{{\rm d}, l} = \E[\mathbbm{1}[\hat{L} \geq l]]$, with corresponding probability of false alarm $p_{\rm fa} = \E[\mathbbm{1}[\hat{L} > L]]$, where $\mathbbm{1}[x]$ is the indicator function equal to one when the condition $x$ is true and equal to zero when $x$ is false.
The empirical detection probability, denoted $\hat{p}_{{\rm d}, l}$, and empirical false alarm probability, denoted $\hat{p}_{\rm fa}$,
replaces the expectation operators with averages over Monte Carlo simulations. The empirical expected \gls{auc}, denoted $\widehat{{\rm AUC}}_{l}$ is found by integrating $\hat{p}_{{\rm d}, l}$ considered a function of $\hat{p}_{\rm fa}$. We add superscripts ${\rm n}$, ${\rm r}$, and ${\rm joint}$ to denote the empirical results using the non-\gls{ris}, \gls{ris}, and the combination of the two, respectively.

\begin{figure*}[t]
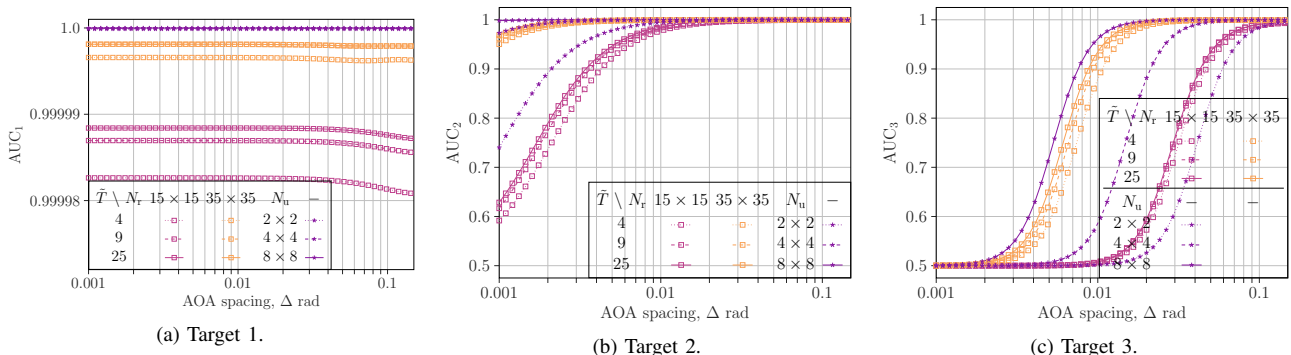

    \centering
    \begin{minipage}{0.3\textwidth}
        \centering
        \subfloat[Target 1.]{\includegraphics[width=\linewidth, page=10]{figures/P3Figs.pdf}\label{subfig:AUC_Epd_spacing_Target1}}
    \end{minipage}
    ~
    \begin{minipage}{0.3\textwidth}
        \centering
        \subfloat[Target 2.]{\includegraphics[width=\linewidth, page=11]{figures/P3Figs.pdf}\label{subfig:AUC_Epd_spacing_Target2}}
    \end{minipage}
    ~
    \begin{minipage}{0.3\textwidth}
        \centering
        \subfloat[Target 3.]{\includegraphics[width=\linewidth, page=12]{figures/P3Figs.pdf}\label{subfig:AUC_Epd_spacing_Target3}}
    \end{minipage}
    \caption{
    Plot of the expected \gls{auc} dependent on the spacing between the targets in the \gls{aoa} at the \gls{ue} for varying number of \gls{ris} elements, $N_{\rm r}$, \gls{ris} phase profiles, $\Tilde{T}$, and number of \gls{ue} antennas, $N_{\rm u}$.
    }
    \label{fig:Epd_spacing}
\end{figure*}

\begin{figure*}[t]
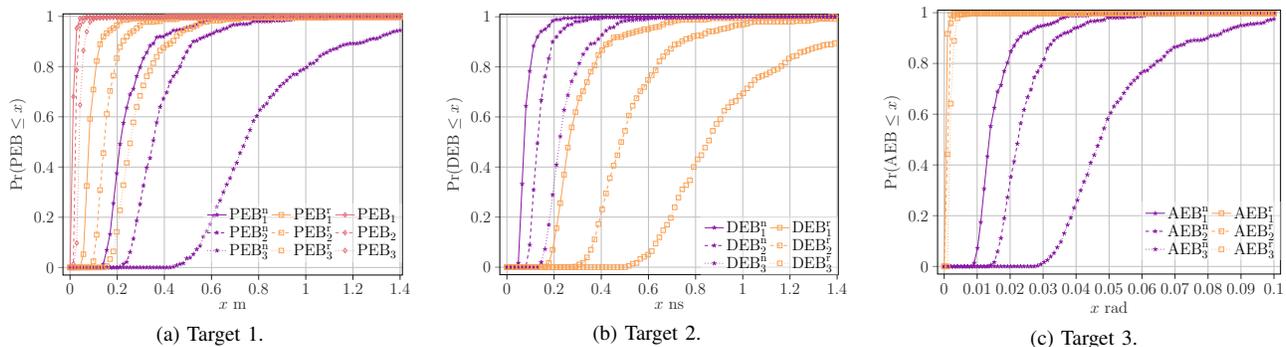

    \centering
    \begin{minipage}{0.3\textwidth}
        \centering
        \subfloat[Target 1.]{\includegraphics[width=\linewidth, page=13]{figures/P3Figs.pdf}\label{subfig:PEBcdf}}
    \end{minipage}
    ~
    \begin{minipage}{0.3\textwidth}
        \centering
        \subfloat[Target 2.]{\includegraphics[width=\linewidth, page=14]{figures/P3Figs.pdf}\label{subfig:DEBcdf}}
    \end{minipage}
    ~
    \begin{minipage}{0.3\textwidth}
        \centering
        \subfloat[Target 3.]{\includegraphics[width=\linewidth, page=15]{figures/P3Figs.pdf}\label{subfig:AEBcdf}}
    \end{minipage}
    \caption{\glspl{cdf} of the \gls{peb}, \gls{deb}, and \gls{aeb} for the three targets with either the non-\gls{ris}, \gls{ris}, or combined. The spacing between the targets is $\Delta=0.1$.}
    \label{fig:FisherCDF}
\end{figure*}

\subsection{Working Principle}
\noindent To illustrate the working principle, we include only the first two targets with $\Delta = 0.07$ rad.
This corresponds to \glspl{sp} located at $[9.87, 8.31, 12.53]~{\rm m}$ and $[8.07, 8.31, 13.76]~{\rm m}$ meaning that the distance between the two \glspl{sp} is $2.17~{\rm m}$.
In this situation, the channel parameters in the view of the \gls{ue} varies only slightly along the azimuth and elevation angles, and it will be difficult for the \gls{ue} alone to detect both \glspl{sp}.

In \cref{fig:OMP}, the \gls{omp} objective, $|\langle \bs{g}_i, \bs{r}\rangle|$, for the non-\gls{ris} and \gls{ris} signals are shown\footnote{For visualization purposes, we evaluate the \gls{omp} objective in the true delay, and display the \gls{omp} objective only as a function of azimuth and elevation.}. From \cref{subfig:nonRIS2_el_az} we see that with the non-\gls{ris} signal we cannot clearly resolve the second \gls{sp} as we cannot see a distinguishable peaks in the second iteration, however, using the \gls{ris} signal we see that the angles are well-separated in the view of the \gls{ris}, and we notice a distinguishable peak in \cref{subfig:RIS2_el_az}.

\subsection{Coherence and Detection Bound}
\noindent To use the results on the expected detection probability \cref{corr:expected_detection_probability,prop:expected_joint_detection_probability}, we require that $\mathcal{R}_o$ is fixed. Let $\mathcal{R}_o = [112, 128] \times [0.55, 0.85] \times [0.65, 0.95]~{\rm ns} \cdot {\rm rad}^2$ be the fixed resolution region.

\subsubsection{Two targets}
Consider initially that only targets one and two are present. As we observed earlier, in this case the detection probability depends only on the false alarm probability, the receive \gls{snr}, and the coherence, $C_{1,2}$.
In \cref{fig:coherence_plots}, we show the coherence of the \gls{ris} and non-\gls{ris} signals, as a function of $\Delta$ with varying number of \gls{ris} elements, \gls{ris} phase profiles, and number of \gls{ue} antennas: we see that the size of the \gls{ris} is crucial for fast decoherence while the importance of the number of \gls{ris} phase profiles is less pronounced. Moreover, we generally observe a lower coherence with the \gls{ris} signal than with the non-\gls{ris} signal for all tested specifications, revealing the feasible detection performance gains by using the \gls{ris}. The connection to detection probability was detailed in \cref{fig:theoretical_coherence_plots}.

\subsubsection{Three targets}
Consider now the \gls{sp} cluster of three targets. The detection probability in this case not only depends on the coherences between the targets, but also $\Tilde{C}$, cf.~\cref{def:generalized_coherence}. We plot in \cref{fig:TildeC} $\Tilde{C}$ as a function of $\Delta$: as was the case for $C_{1,2}$ in \cref{fig:coherence_plots}, the \gls{ris} signal with $N_{\rm r} = 35 \times 35$ shows significant improvements in decoherence as compared to the non-\gls{ris} signal, and this is particularly pronounced when $\Tilde{T} = 25$.

Using the results for the expected detection probability \cref{eq:expected_detection_parameter1_L3,eq:expected_detection_parameter2_L3,eq:expected_detection_parameter3_L3}, including \cref{prop:expected_joint_detection_probability}, we show in \cref{fig:Epmd_ROC} the expected misdetection probability against the false alarm probability on a log-log scale for each of the three targets when $\Delta = 0.1$ rad. This shows that for these settings, the \gls{ris} signal has improved detection capabilities compared to the non-\gls{ris} signal for both the second and third targets, although the non-\gls{ris} signal has better detection capabilities for the first target, due to the higher \gls{snr}.
Moreover, we see that joint detection outperforms individual detection using either \gls{ris} signal or non-\gls{ris} signal for all three targets.

Comparing the detection performance depending on the spacing between the targets, $\Delta$, we show in \cref{fig:Epd_spacing} the expected \gls{auc} for varying specifications regarding number of \gls{ris} elements, $N_{\rm r}$, and the number of \gls{ris} phase profiles, $\Tilde{T}$, as well as the \gls{ue} specifications, herein the number of \gls{ue} antennas.
It is observed that we can achieve improved detection performance with the \gls{ris} signal over the non-\gls{ris} signal, in cases where $N_{\rm u}$ is small while $N_{\rm r}$ is relatively large: with the \gls{ris} signal, when $\Tilde{T}=25$ and $N_{\rm r} = 35\times 35$, an \gls{auc} of $1$ is approached at $\Delta=0.02~{\rm rad}$, while for the non-\gls{ris} signal, with $N_{\rm u}=2\times 2$, this occurs at $\Delta=0.1~{\rm rad}$.

%
\begin{figure*}[t]
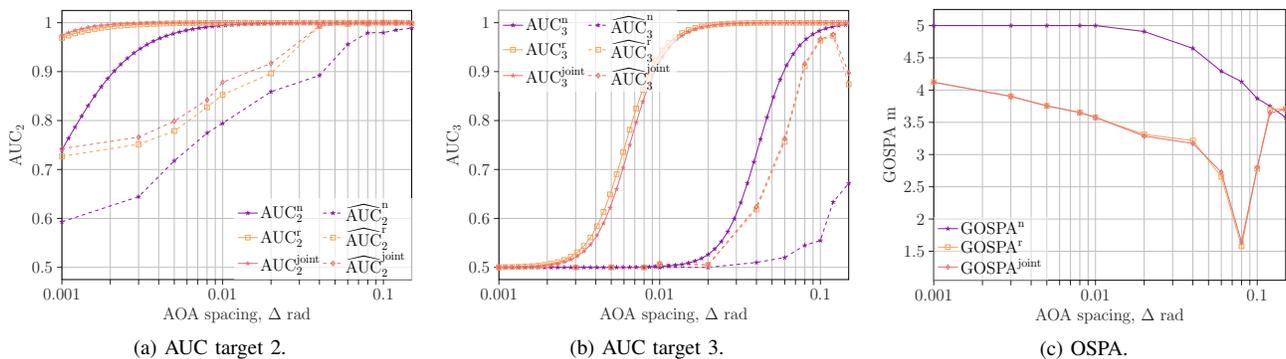

    \centering
    \begin{minipage}{0.3\textwidth}
        \centering
        \subfloat[AUC target 2.]{\includegraphics[width=\linewidth, page=16]{figures/P3Figs.pdf}\label{subfig:Sensing_AUC2}}
    \end{minipage}
    ~
    \begin{minipage}{0.3\textwidth}
        \centering
        \subfloat[AUC target 3.]{\includegraphics[width=\linewidth, page=17]{figures/P3Figs.pdf}\label{subfig:Sensing_AUC3}}
    \end{minipage}
    ~
    \begin{minipage}{0.3\textwidth}
        \centering
        \subfloat[OSPA.]{\includegraphics[width=\linewidth, page=18]{figures/P3Figs.pdf}\label{subfig:Sensing_OSPA}}
    \end{minipage}
    \caption{Results using the channel estimation algorithm for varying \gls{aoa} spacing: (a,b) comparison of empirical expected \gls{auc} and the theoretical upper bound on the \gls{auc}, and (c) the \gls{gospa} metric.}
    \label{fig:Sensing}
\end{figure*}

\subsection{Positioning Bound}
\noindent With the same setup as in the previous section, we show in \cref{fig:FisherCDF} the empirical \glspl{cdf} of the \gls{peb}, \gls{deb}, and \gls{aeb}, derived in \cref{subsec:fisher_analysis}, for each of the three targets when $\Delta=0.1~{\rm rad}$ and considering the non-\gls{ris}, \gls{ris}, and combined signals. From \cref{subfig:PEBcdf} we see that combining the non-\gls{ris} and \gls{ris} signals in all cases outperforms the individual ones in terms of \gls{peb}, and we also see that the \gls{ris} signal performs better than the non-\gls{ris} signal. Going to the \glspl{deb} and \glspl{aeb} in \cref{subfig:DEBcdf,subfig:AEBcdf}, we notice that the non-\gls{ris} delay is better estimated than the \gls{ris} delay which is due to better \gls{snr} conditions for the non-\gls{ris} signal, but on the other hand the angle is better estimated by the \gls{ris} than the non-\gls{ris} signal due to the higher angular resolution provided by the \gls{ris}.

\subsection{Sensing Performance}
\noindent In \cref{subfig:Sensing_AUC2,subfig:Sensing_AUC3}, we show the expected \gls{auc} for the sensing algorithm with the \gls{omp} channel estimation method in comparison to the derived bound as a function of the \gls{aoa} spacing, $\Delta$ rad. We observe that while we do not achieve the bound, the relation between the \gls{ris}, non-\gls{ris}, and joint methods follow a similar trend as with the bound. Moreover, as in the bound, we observe that the \gls{auc} with the \gls{ris} signal dominates that of the non-\gls{ris} signal, while the combination of the two signals has similar performance to only using the \gls{ris} signal. Focusing on $\Delta=0.1$ rad, we are able to improve the \gls{auc} for the third target from $0.55$ without the \gls{ris} to $0.95$ with the \gls{ris}.

Based on preliminary calibration, we fix the threshold parameters
${\rm residual}_{\rm th}^{\rm n} = 3\cdot 10^{-5}$, and ${\rm residual}_{\rm th}^{\rm r} = 3.5\cdot 10^{-5}$, and show the \gls{gospa} for varying \gls{aoa} spacing, $\Delta$ rad, in \cref{subfig:Sensing_OSPA}. Here, we observe that the \gls{ris} signal has a lower \gls{gospa} than the non-\gls{ris} signal for small $\Delta$ with largest improvement of $2.5$ m at $\Delta=0.8$, however, when $\Delta > 0.1$ the \gls{ris} performance degrades. This happens due to two reasons: (i) as $\Delta$ increases the resolution region construction can fail to cover all the \glspl{sp}, and (ii) as $\Delta$ increases the resolution region tends to grow causing less concentration of the \gls{ris} beams resulting in a lower receive \gls{snr}.

\section{Conclusion}\label{sec:conclusion}
\noindent In this work, we have assessed the potential of \gls{ris} for sensing targets in close proximity in the presence of a \gls{los} link. With a theoretical analysis of the sensing system, considering detection probability and Fisher analysis, we have uncovered the importance of a novel coherence concept. We can conclude that as \gls{ris} can improve the sensing resolution if the \gls{ris}-assisted signal decoheres more rapidly than the non-\gls{ris} signal, while maintaining a sufficiently high receive \gls{snr}. The results show the capability to achieve a lower coherence with the \gls{ris}-assisted signal than the non-\gls{ris} signal when the sensor has a small antenna array relative to the size of the \gls{ris} and the number of different \gls{ris} phase profiles. This improvement in terms of coherence explains the observed improvements in sensing performance in terms of both detection and localization accuracy with the \gls{ris}-assisted signal as opposed to the non-\gls{ris} signal. Finally, we find that combining the two signals can lead to sensing improvements if the signals are of similar quality. In future work we will use the coherence concept to guide optimization of the \gls{ris} phase profiles, which may further improve high-resolution sensing capabilities of the \gls{ris}-assisted signal.

\appendices

\section{Channel Parameters}\label{subsec:channel_parameters}
\noindent The complex path gains $\alpha_0$, $\bar{\alpha}_l$, and $\alpha_l$ are modelled as $\bar{\alpha}_l = P_{\bar{\alpha}_l} \bar{\xi}_l$ and $\alpha_l = P_{\alpha_l} \xi_l$ where $\bar{\xi}_l,\xi_l \sim \mathcal{C}\mathcal{N}(0, \frac{4\sigma^2_{\rm rcs,l}}{\pi})$ are independent and simulates a random phase shift and a Rayleigh distributed square root of \gls{rcs} with mean $\sigma_{\rm rcs}$. This also means that $|\alpha_l|$ (resp. $\bar{\alpha}_l$) is Rayleigh distributed with scale parameter $\varsigma_{\alpha_l} := \sqrt{\frac{2}{\pi}}P_{\alpha_l} \sigma_{\rm rcs}$ (resp. $\varsigma_{\bar{\alpha}_l} := \sqrt{\frac{2}{\pi}}P_{\bar{\alpha}_l} \sigma_{\rm rcs}$). The amplitudes are modelled as 
\begin{subequations}\label{eq:ch_gains}
\begin{align}
    P_{\alpha_0}^2 &= \frac{E_s\lambda^4 (g^{\rm ur})^{4q_0}}{(4\pi)^3 \Vert\bs{p} - \bs{p}_{\rm r}\Vert^4},\label{eq:single_alpha0}\\
    P_{\alpha_l}^2 &= \frac{E_s\lambda^2}{(4\pi)^3 \Vert\bs{c}_l - \bs{p}\Vert^4}, \label{eq:single_beta1}\\
    P_{\bar{\alpha}_l}^2 &= \frac{E_s\lambda^4 (g^{\rm ur})^{2q_0}(g_l^{\rm sr})^{2q_0}}{(4\pi)^4 \Vert\bs{p} - \bs{p}_{\rm r}\Vert^2 \Vert\bs{p}_{\rm r} - \bs{c}_l\Vert^2 \Vert\bs{c}_l - \bs{p}\Vert^2},\label{eq:single_alpha1}
\end{align}
\end{subequations}
where $E_{\rm s}$ is the energy per subcarrier, $q_0=0.285$, $g^{\rm ur}=\frac{(\bs{p} - \bs{p}_{\rm r})^\top\bs{n}_{\rm r}}{\Vert\bs{p} - \bs{p}_{\rm r}\Vert}$, $g_l^{\rm sr}=\frac{(\bs{c}_l - \bs{p}_{\rm r})^\top\bs{n}_{\rm r}}{\Vert\bs{c}_l - \bs{p}_{\rm r}\Vert}$, and $\bs{n}_{\rm r}$ is the normal vector of the \gls{ris} \cite{Hyowon2023:Bounce,AbuShaban2018Bounds}.
The \glspl{toa} $\tau_0$, $\bar{\tau}_l$, and $\tau_l$ and the \glspl{aoa} $\bs{\phi}_0$, $\bs{\phi}_l$, $\bs{\theta}_0$, and $\bs{\theta}_l$ are defined as
\begin{subequations}\label{eq:ch_params}
\begin{align}
    \tau_0 &= \frac{2\Vert \bs{p} - \bs{p}_{\rm r}\Vert}{c},\label{eq:RIS_UE_delay}\\
    \bs{\theta}_0 &= \big(\text{atan2}(y^{\rm ru}, x^{\rm ru}), \text{arccos}(z^{\rm ru})\big)^\top,\label{eq:RIS_UE_angle}\\
    \bs{\phi}_0 &= \big(\text{atan2}(y^{\rm ur}, x^{\rm ur}), \text{arccos}(z^{\rm ur})\big)^\top,\label{eq:UE_RIS_angle}\\
    \tau_l &= \tau(\bs{c}_l) = \frac{2 \Vert \bs{c}_l - \bs{p}\Vert}{c},\label{eq:nonRIS_delay}\\
    \bar{\tau}_l &= \bar{\tau}(\bs{c}_l) = \frac{\Vert \bs{p} - \bs{p}_{\rm r}\Vert + \Vert \bs{p}_{\rm r} - \bs{c}_l\Vert + \Vert \bs{c}_l - \bs{p}\Vert}{c},\label{eq:RIS_delay}\\
    \bs{\theta}_l &= \bs{\theta}(\bs{c}_l) = \big(\text{atan2}(y_l^{\rm su}, x_l^{\rm su}), \text{arccos}(z_l^{\rm su})\big)^\top,\label{eq:nonRIS_angle}\\
    \bs{\phi}_l &= \bs{\phi}(\bs{c}_l) = \big(\text{atan2}(y_l^{\rm sr}, x_l^{\rm sr}), \text{arccos}(z_l^{\rm sr})\big)^\top,\label{eq:RIS_angle}
\end{align}
\end{subequations}
where $\bs{x}^{\rm ru} = (x^{\rm ru}, y^{\rm ru}, z^{\rm ru})^\top$, $\bs{x}_l^{\rm su} = (x_l^{\rm su}, y_l^{\rm su}, z_l^{\rm su})^\top$, $\bs{x}^{\rm ur} = (x^{\rm ur}, y^{\rm ur}, z^{\rm ur})^\top$, and $\bs{x}_l^{\rm sr} = (x_l^{\rm sr}, y_l^{\rm sr}, z_l^{\rm sr})^\top$ are the direction vectors of the \gls{ris} to the \gls{ue}, the $l$th \gls{sp} to the \gls{ue}, the \gls{ue} to the \gls{ris}, and the $l$th \gls{sp} to the \gls{ris} in the respective local coordinate system. These vectors can be expressed using global positions $\bs{c}_l$, $\bs{p}$, $\bs{p}_{\rm r}$ and rotation matrices $\bs{Q}_{\rm u}$ and $\bs{Q}_{\rm r}$ as $\bs{x}^{\rm ru} = \bs{Q}_{\rm u}^\top \frac{\bs{p}_{\rm r} - \bs{p}}{\Vert\bs{p}_{\rm r} - \bs{p}\Vert}$, $\bs{x}_l^{\rm su} = \bs{Q}_{\rm u}^\top \frac{\bs{c}_l - \bs{p}}{\Vert\bs{c}_l - \bs{p}\Vert}$, $\bs{x}_l^{\rm sr} = \bs{Q}_{\rm r}^\top \frac{\bs{c}_l - \bs{p}_{\rm r}}{\Vert\bs{c}_l - \bs{p}_{\rm r}\Vert}$, and $\bs{x}^{\rm ur} = \bs{Q}_{\rm r}^\top \frac{\bs{p} - \bs{p}_{\rm r}}{\Vert\bs{p} - \bs{p}_{\rm r}\Vert}$ \cite{AbuShaban2018Bounds}.
The function $\text{atan2}(x, y)$ returns the principal value of the argument of $x+jy$.

The expressions \cref{eq:RIS_delay,eq:nonRIS_delay,eq:nonRIS_angle,eq:RIS_angle} define a mapping from the \gls{sp} position $\bs{c}_l$ into the channel parameter space $(\tau_l, \bar{\tau}_l, \bs{\theta}_l, \bs{\phi}_l)$, specifically, we define $\bs{h}(\bs{c}) = [\tau(\bs{c}), \bs{\theta}^\top(\bs{c}), \bar{\tau}(\bs{c}), \bs{\phi}^\top(\bs{c})]^\top$.

\section{Proof of results in Section III.}
\subsection{Proof of proposition 1}\label{subsec:proof_prop1}
\begin{proof}\label{proof:proof_prop1}
    Conditioned on the channel coefficients we have that $\bs{y}|\bs{\alpha} \sim \mathcal{CN}(\bs{G} \bs{\alpha}, \frac{\sigma^2}{2}\bs{I})$.
    In the $l$-th iteration for $l=2,\dots,L$, given that we know $\bs{\eta}_1,\dots, \bs{\eta}_{l-1}$, the maximum likelihood estimate of $\bs{\alpha}_{1:l-1}$ is
    \begin{align}
        \hat{\bs{\alpha}}_{l,1:l-1} &= \big(\bs{G}_{1:l-1}^H\bs{G}_{1:l-1}\big)^{-1} \bs{G}_{1:l-1}^H \bs{y} \nonumber\\
        &= \bs{\alpha}_{1:l-1} + (\bs{G}_{1:l-1}^{\rm n})^\dagger \big(\bs{G}_{l:L}^{\rm n}\bs{\alpha}_{l:L} + \bs{\varepsilon}^{\rm n}\big)\label{eq:OMP_alpha_est}
    \end{align}
    where $\bs{G}_{1:l-1}^\dagger := \big(\bs{G}_{1:l-1}^H\bs{G}_{1:l-1}^{\rm n}\big)^{-1} \bs{G}_{1:l-1}^H$ is the Moore-Penrose pseudo-inverse of $\bs{G}_{1:l-1}$. 
    Following the greedy approach of \gls{omp}, we define a residual as
    \begin{equation}
        \bs{r}_l := \bs{y} - \bs{G}_{1:l-1} \hat{\bs{\alpha}}_{l,1:l-1} = \big(\bs{I} - \bs{P}_{l-1}\big)(\bs{G}_{l:L} \bs{\alpha}_{l:L} + \bs{\varepsilon})
    \end{equation}
    where $\bs{P}_{l-1}$ is the projection matrix for the maximum likelihood fitting, i.e., $\bs{P}_{l-1} = \bs{G}_{1:l-1} \bs{G}_{1:l-1}^\dagger$.
    Accordingly, $(\bs{I} - \bs{P}_{l-1})$ is itself a projection matrix onto the orthogonal space. Hence, conditioned on the channel coefficient we have that
    \begin{equation}
            \bs{r}_l|\bs{\alpha}_{l:L} \sim \mathcal{CN}\Big(\big(\bs{I} - \bs{P}_{l-1}\big) \bs{G}_{l:L} \bs{\alpha}_{l:L}, \frac{\sigma^2}{2}\big(\bs{I} - \bs{P}_{l-1}\big)\Big).
    \end{equation}
    Assume now that $\bs{\eta}_l$ is known. The least squares estimate of the channel coefficient is then
    $\hat{\alpha}_l = \frac{\bs{g}_l^H \bs{r}_l}{\Vert \bs{g}_l\Vert^2}$, hence $\hat{\alpha}_l|\bs{\alpha}_{l:L}$ is circularly-symmetric complex Gaussian with $\E[\hat{\alpha}_l] = \frac{\bs{g}_l^H\big(\bs{I} - \bs{P}_{l-1}\big) \bs{G}_{l:L}\bs{\alpha}_{l:L}}{\Vert \bs{g}_l\Vert^2}$ and ${\rm Var}[\hat{\alpha}_l] = \frac{\sigma^2 \bs{g}_l^H\big(\bs{I} - \bs{P}_{l-1}\big) \bs{g}_l}{2 \Vert \bs{g}_l\Vert^4}$. Hence, the statistic $\gamma_l := \frac{2|\hat{\alpha}_l|^2}{{\rm Var}[\hat{\alpha}_l]}$ follows a non-central $\chi^2$ distribution with $2$ degrees of freedom and non-centrality parameter
    \begin{equation}
        \mu_l := \frac{2|\E[\hat{\alpha}_l]|^2}{{\rm Var}[\hat{\alpha}_l]} = \frac{4|\bs{g}_l^H\big(\bs{I} - \bs{P}_{l-1}\big) \bs{G}_{l:L}\bs{\alpha}_{l:L}|^2}{\sigma^2\bs{g}_l^H\big(\bs{I} - \bs{P}_{l-1}\big) \bs{g}_l}
    \end{equation}
    and the result follows immediately by observing that $p_{{\rm d}, l}(\bs{\alpha}_{l:L}) = {\rm Pr}(\gamma_l \geq \gamma_{\rm th} | \bs{\alpha}_{l:L}) = Q_1(\sqrt{\mu_l}, \sqrt{\gamma_{\rm th}})$. 
\end{proof}
\subsection{Proof of corollary 1}\label{subsec:proof_corr1}
\begin{proof}\label{proof:proof_corr1}
    Let $\beta_l := |\bs{g}_l^H\big(\bs{I} - \bs{P}_{l-1}\big) \bs{G}_{l:L}\bs{\alpha}_{l:L}|^2$ and notice that the distributional assumptions of $\alpha_l$ implies that $\sqrt{\beta_l}$ is Rayleigh distributed with squared scale parameter $\zeta_l$.
    Define $A_l := \frac{4}{\sigma^2\bs{g}_l^H\big(\bs{I} - \bs{P}_{l-1}\big) \bs{g}_l}$ such that $\mu_l = A_l \beta_l$. The marginal distribution for detection, also interpreted as the expected detection probability is $p_{{\rm d}, l} = {\rm Pr}(\gamma_l \geq \gamma_{\rm th})$.
    By the law of total probability, $p_{{\rm d}, l}$ can be computed as
    \begin{align}
        p_{{\rm d}, l} &= \int_0^\infty Q_1(\sqrt{A_l \beta_l}, \sqrt{\gamma_{\rm th}}) \frac{\beta_l}{\zeta_l} {\rm exp}\Big(\frac{-\beta_l}{2\zeta_l}\Big) {\rm d}\beta_l\nonumber\\
        &= \int_0^\infty x Q_1(\sqrt{A_l \zeta_l} x, \sqrt{\gamma_{\rm th}}) {\rm exp}\Big(\frac{-x^2}{2}\Big) {\rm d}x,
    \end{align}
    using variable substitution $x := \sqrt{\frac{\beta_l}{\zeta_l}}$. Then
    \begin{equation}
        p_{{\rm d}, l} = {\rm exp}\Big(\frac{-\gamma_{\rm th}}{2 A_l \zeta_l + 2}\Big) = {\rm exp}\Big(\frac{{\rm log}(p_{\rm fa})}{A_l \zeta_l + 1}\Big),
    \end{equation}
    is found by computing the integral \cite{Sofotasios2014:Analytic}.
\end{proof}
\subsection{Proof of proposition 2}\label{subsec:proof_prop2}
\begin{proof}\label{proof:proof_prop2}
    Due to the assumption of independence of the channel coefficients, the joint distribution of $\beta_l$ and $\bar{\beta}_l$ factorizes, and the expected joint detection probability is
    \begin{align}
        p_{{\rm d}, l}^{\rm joint} &= \E_{\bar{\beta}_l}[\E_{\beta_l}[p_{{\rm d}, l}^{\rm joint}(\bs{\alpha}_{l:L}, \bar{\bs{\alpha}}_{l:L})]]\nonumber\\
        &= \E_{\bar{\beta}_l}[p_{{\rm d}, l}^{\rm n} + p_{{\rm d}, l}^{\rm r}(\bar{\bs{\alpha}}_{l:L}) - p_{{\rm d}, l}^{\rm n}p_{{\rm d}, l}^{\rm r}(\bar{\bs{\alpha}}_{l:L})]\nonumber\\
        &= p_{{\rm d}, l}^{\rm n} + p_{{\rm d}, l}^{\rm r} - p_{{\rm d}, l}^{\rm n} p_{{\rm d}, l}^{\rm r},
    \end{align}
    where $\E_{\beta_l}$ and $\E_{\bar{\beta}_l}$ denotes expectation with respect to $\beta_l$ and $\bar{\beta}_l$, respectively.
    The joint false alarm probability follows immediately by the same arguments.
\end{proof}

\section{Fisher Information Derivations}\label{subsec:fisher_information_derivations}
\noindent In this appendix, we provide the partial derivatives needed in the Fisher analysis \cref{subsec:fisher_analysis}.
Before providing the partial derivatives, for convenience of notation, we begin by defining
\begin{subequations}
\begin{align}
    \bs{g}_{\tau_l}^{\rm n} &= \langle \bs{a}_{\rm u}^*(\bs{\theta}_l), f \rangle \bs{1}_{\Tilde{T}} \otimes \big(\bs{n} \odot \bs{d}(\tau_l)\big) \otimes \bs{a}_{\rm u}(\bs{\theta}_l),\\
    \begin{split}
        \bs{g}_{\theta_l^{\rm az}}^{\rm n} &=  \bs{1}_{\Tilde{T}} \otimes \bs{d}(\tau_l) \otimes \Big(\langle \bs{a}_{\rm u}^*(\bs{\theta}_l), f \rangle \frac{\partial \bs{a}_{\rm u}(\bs{\theta}_l)}{\partial \theta_l^{\rm az}}\\ &\qquad\qquad\qquad\qquad + \langle \frac{\partial \bs{a}_{\rm u}^*(\bs{\theta}_l)}{\partial \theta_l^{\rm az}}, f\rangle \bs{a}_{\rm u}(\bs{\theta}_l)\Big).
    \end{split}\\
    \begin{split}
        \bs{g}_{\theta_l^{\rm el}}^{\rm n} &= \bs{1}_{\Tilde{T}} \otimes \bs{d}(\tau_l) \otimes \Big(\langle \bs{a}_{\rm u}^*(\bs{\theta}_l), f \rangle \frac{\partial \bs{a}_{\rm u}(\bs{\theta}_l)}{\partial \theta_l^{\rm el}}\\ &\qquad\qquad\qquad\qquad + \langle \frac{\partial \bs{a}_{\rm u}^*(\bs{\theta}_l)}{\partial \theta_l^{\rm el}}, f\rangle \bs{a}_{\rm u}(\bs{\theta}_l)\Big).
    \end{split}
\end{align}
\end{subequations}
where we have defined $\bs{n} = [0, 1, \dots, N-1]^\top$, and where the partial derivatives of the array response vector are given as
\begin{subequations}
\begin{align}
    \frac{\partial \bs{a}_{\rm u}(\bs{\theta}_l)}{\partial \theta_l^{\rm az}} &= j \Big(\bs{P}_{\rm u}^\top \frac{\partial \bs{\kappa}(\bs{\theta}_l)}{\partial \theta_l^{\rm az}}\Big) \odot \exp(j \bs{P}_{\rm u}^\top \bs{\kappa}(\bs{\theta}_l)),\label{eq:derArrayUaz}\\
    \frac{\partial \bs{a}_{\rm u}(\bs{\theta}_l)}{\partial \theta_l^{\rm el}} &= j \Big(\bs{P}_{\rm u}^\top \frac{\partial \bs{\kappa}(\bs{\theta}_l)}{\partial \theta_l^{\rm el}}\Big) \odot \exp(j \bs{P}_{\rm u}^\top \bs{\kappa}(\bs{\theta}_l)),\label{eq:derArrayUel}
\end{align}
\end{subequations}
\resizebox{1\linewidth}{!}{with partial derivatives of the wavenumber vector given as~~}
$\frac{\partial \bs{\kappa}(\bs{\theta}_l)}{\partial \theta_l^{\rm az}} = \frac{2\pi}{\lambda} [-\sin(\theta_l^{\rm az})\sin(\theta_l^{\rm el}), \cos(\theta_l^{\rm az})\sin(\theta_l^{\rm el}), 0]^\top$, and \resizebox{1\linewidth}{!}{$\frac{\partial \bs{\kappa}(\bs{\theta}_l)}{\partial \theta_l^{\rm el}} = \frac{2\pi}{\lambda} [\cos(\theta_l^{\rm az})\cos(\theta_l^{\rm el}), \sin(\theta_l^{\rm az})\cos(\theta_l^{\rm el}), -\sin(\theta_l^{\rm el})]^\top$.}
Now the partial derivatives for the non-\gls{ris} signal are simply given by $\frac{\partial \bs{\mu}^{\rm n}}{\partial \Re\{\alpha_l\}} = \bs{g}_{l}^{\rm n}$, $\frac{\partial \bs{\mu}^{\rm n}}{\partial \Im\{\alpha_l\}} = j \bs{g}^{\rm n}(\tau_l, \bs{\theta}_l)$, $\frac{\partial \bs{\mu}^{\rm n}}{\partial \tau_l} = -j2\pi \Delta_f \alpha_l \bs{g}_{\tau_l}^{\rm n}$, $\frac{\partial \bs{\mu}^{\rm n}}{\partial \theta_l^{\rm az}} = \alpha_l \bs{g}_{\theta_l^{\rm az}}^{\rm n}$, and $\frac{\partial \bs{\mu}^{\rm n}}{\partial \theta_l^{\rm el}} = \alpha_l\bs{g}_{\theta_l^{\rm el}}^{\rm n}$.

We define for convenience the notation
\begin{subequations}
\begin{align}
    \bs{g}_{\tau_l}^{\rm r} &= \langle \bs{a}_{\rm u}^*(\bs{\theta}_l), f \rangle \bs{\nu}(\bs{\phi}_l) \otimes \big(\bs{n} \odot \bs{d}(\bar{\tau}_l)\big) \otimes \bs{a}_{\rm u}(\bs{\theta}_0),\\
    \bs{g}_{\theta_l^{\rm az}}^{\rm r} &= \langle \frac{\partial \bs{a}^*_{\rm u}(\bs{\theta}_l)}{\partial \theta_l^{\rm az}}, \bs{f}\rangle \bs{\nu}(\bs{\phi}_l) \otimes \bs{d}(\bar{\tau}_l) \otimes \bs{a}_{\rm u}(\bs{\theta}_0),\\
    \bs{g}_{\theta_l^{\rm el}}^{\rm r} &= \langle \frac{\partial \bs{a}^*_{\rm u}(\bs{\theta}_l)}{\partial \theta_l^{\rm el}}, \bs{f}\rangle \bs{\nu}(\bs{\phi}_l) \otimes \bs{d}(\bar{\tau}_l) \otimes \bs{a}_{\rm u}(\bs{\theta}_0),\\
    \begin{split}
        \bs{g}_{\phi_l^{\rm az}}^{\rm r} &= \langle \bs{a}_{\rm u}^*(\bs{\theta}_l), f \rangle \Big(\big(\bs{\omega}_t \odot \bs{a}_{\rm r}(\bs{\phi}_0)\big)^\top \frac{\partial \bs{a}_{\rm r}(\bs{\phi}_l)}{\partial \phi_l^{\rm az}}\Big)\\ &\qquad\qquad\qquad\qquad \otimes \bs{d}(\bar{\tau}_l) \otimes \bs{a}_{\rm u}(\bs{\theta}_0).
    \end{split}\\
    \begin{split}
        \bs{g}_{\phi_l^{\rm el}}^{\rm r} &= \langle \bs{a}_{\rm u}^*(\bs{\theta}_l), f \rangle \Big(\big(\bs{\omega}_t \odot \bs{a}_{\rm r}(\bs{\phi}_0)\big)^\top \frac{\partial \bs{a}_{\rm r}(\bs{\phi}_l)}{\partial \phi_l^{\rm el}}\Big)\\ &\qquad\qquad\qquad\qquad \otimes \bs{d}(\bar{\tau}_l) \otimes \bs{a}_{\rm u}(\bs{\theta}_0).
    \end{split}
\end{align}
\end{subequations}
where the partial derivatives of the array response vector at the \gls{ris} is computed as in \cref{eq:derArrayUaz,eq:derArrayUel}.
Now, the partial derivatives for the \gls{ris} signal are easily represented as $\frac{\partial \bs{\mu}^{\rm r}}{\partial \Re\{\bar{\alpha}_l\}} = \bs{g}_{l}^{\rm r}$, $\frac{\partial \bs{\mu}^{\rm r}}{\partial \Im\{\bar{\alpha}_l\}} = j\bs{g}_{l}^{\rm r}$, $\frac{\partial \bs{\mu}^{\rm r}}{\partial \bar{\tau}_l} = -j2\pi \Delta_f \bar{\alpha}_l \bs{g}_{\tau_l}^{\rm r}$, $\frac{\partial \bs{\mu}^{\rm r}}{\partial \theta_l^{\rm az}} = \bar{\alpha}_l \bs{g}_{\theta_l^{\rm az}}^{\rm r}$, $\frac{\partial \bs{\mu}^{\rm r}}{\partial \theta_l^{\rm el}} = \bar{\alpha}_l \bs{g}_{\theta_l^{\rm el}}^{\rm r}$, $\frac{\partial \bs{\mu}^{\rm r}}{\partial \phi_l^{\rm az}} = \bar{\alpha}_l \bs{g}_{\phi_l^{\rm az}}^{\rm r}$, and $\frac{\partial \bs{\mu}^{\rm r}}{\partial \phi_l^{\rm el}} = \bar{\alpha}_l \bs{g}_{\phi_l^{\rm el}}^{\rm r}$.

We provide the partial derivatives used to compute the Jacobian matrices, $\bs{T}^{\rm n}$ and $\bs{T}^{\rm r}$:
\begin{subequations}\label{eq:Fisher_Jacobian}
\begin{align}
    \frac{\partial \tau(\bs{c}_l)}{\partial \bs{c}_l} &= \frac{2(\bs{c}_l - \bs{p})}{c\Vert \bs{c}_l - \bs{p}\Vert},\\
    \frac{\partial \bar{\tau}(\bs{c}_l)}{\partial \bs{c}_l} &= \frac{\bs{p}_{\rm r} - \bs{c}_l}{c \Vert \bs{p}_{\rm r} - \bs{c}_l\Vert} + \frac{\bs{c}_l - \bs{p}}{c \Vert \bs{p} - \bs{c}_l\Vert},\\
    \frac{\partial \theta^{\rm az}(\bs{c}_l)}{\partial \bs{c}_l} &= \frac{x_l^{\rm su}[\bs{Q}_{\rm u}]_{:,2} - y_l^{\rm su}[\bs{Q}_{\rm u}]_{:,1}}{(x_l^{\rm su})^2 + (y_l^{\rm su})^2},\\
    \frac{\partial \theta^{\rm el}(\bs{c}_l)}{\partial \bs{c}_l} &= -\frac{\Vert \bs{x}_l^{\rm su}\Vert^2 [\bs{Q}_{\rm u}]_{:,3} - z_l^{\rm su}(\bs{c}_l - \bs{p})}{\Vert \bs{x}_l^{\rm su}\Vert^3 \Big(1 - \Big(\frac{z_l^{\rm su}}{\Vert \bs{x}_l^{\rm su}\Vert}\Big)^2\Big)^{1/2}},\\
    \frac{\partial \phi^{\rm az}(\bs{c}_l)}{\partial \bs{c}_l} &= \frac{x_l^{\rm sr} [\bs{Q}_{\rm r}]_{:,2} - y_l^{\rm sr} [\bs{Q}_{\rm r}]_{:,1}}{(x_l^{\rm sr})^2 + (y_l^{\rm sr})^2},\\
    \frac{\partial \phi^{\rm el}(\bs{c}_l)}{\partial \bs{c}_l} &= -\frac{\Vert \bs{x}_l^{\rm sr}\Vert^2 [\bs{Q}_{\rm r}]_{:,3} - z_l^{\rm sr}(\bs{c}_l - \bs{p}_{\rm r})}{\Vert \bs{x}_l^{\rm sr}\Vert^3\Big(1 - \Big(\frac{z_l^{\rm sr}}{\Vert \bs{x}_l^{\rm sr}\Vert}\Big)^2\Big)^{1/2}}.
\end{align}
\end{subequations}

\printbibliography

@article{Liu2022,
    author={Liu, Fan and Cui, Yuanhao and Masouros, Christos and Xu, Jie and Han, Tony Xiao and Eldar, Yonina C. and Buzzi, Stefano},
    journal={IEEE J. Sel. Areas Commun.}, 
    title={Integrated Sensing and Communications: Toward Dual-Functional Wireless Networks for {6G} and Beyond}, 
    year={2022},
    volume={40},
    number={6},
    pages={1728-1767},
    doi={10.1109/JSAC.2022.3156632}
}

@article{Bjornson2022,
    author={Bj\"ornson, Emil and Wymeersch, Henk and Matthiesen, Bho and Popovski, Petar and Sanguinetti, Luca and de Carvalho, Elisabeth},
    journal={IEEE Signal Process. Mag.}, 
    title={Reconfigurable Intelligent Surfaces: A signal processing perspective with wireless applications}, 
    year={2022},
    volume={39},
    number={2},
    pages={135-158},
    doi={10.1109/MSP.2021.3130549}
}

@article{Chen2023:Smart,
    author={Chen, Hui and Kim, Hyowon and Ammous, Mustafa and Seco-Granados, Gonzalo and Alexandropoulos, George C. and Valaee, Shahrokh and Wymeersch, Henk},
    journal={IEEE Commun. Mag.}, 
    title={{RISs} and Sidelink Communications in Smart Cities: The Key to Seamless Localization and Sensing}, 
    year={2023},
    volume={61},
    number={8},
    pages={140-146},
    doi={10.1109/MCOM.001.2200970}
}

@article{Wymeersch2020,
    author={Wymeersch, Henk and He, Jiguang and Denis, Benoit and Clemente, Antonio and Juntti, Markku},
    journal={IEEE Veh. Technol. Mag.}, 
    title={Radio Localization and Mapping With Reconfigurable Intelligent Surfaces: Challenges, Opportunities, and Research Directions}, 
    year={2020},
    volume={15},
    number={4},
    pages={52-61},
    doi={10.1109/MVT.2020.3023682}
}

@article{Aubry2021Surveillance,
    author={Aubry, Augusto and De Maio, Antonio and Rosamilia, Massimo},
    journal={IEEE Trans. Veh. Technol.}, 
    title={Reconfigurable Intelligent Surfaces for {N-LOS} Radar Surveillance}, 
    year={2021},
    volume={70},
    number={10},
    pages={10735-10749},
    doi={10.1109/TVT.2021.3102315}
}

@article{Zhang2022Ubiquitous,
    author={Zhang, Hongliang and Di, Boya and Bian, Kaigui and Han, Zhu and Poor, H. Vincent and Song, Lingyang},
    journal={Proc. IEEE}, 
    title={Toward Ubiquitous Sensing and Localization With Reconfigurable Intelligent Surfaces}, 
    year={2022},
    volume={110},
    number={9},
    pages={1401-1422},
    doi={10.1109/JPROC.2022.3169771}
}

@inproceedings{Keykhosravi2022,
    author={Keykhosravi, Kamran and Seco-Granados, Gonzalo and Alexandropoulos, George C. and Wymeersch, Henk},
    booktitle={2022 IEEE Int. Conf. Commun.}, 
    title={{RIS}-Enabled Self-Localization: Leveraging Controllable Reflections With Zero Access Points}, 
    year={2022},
    volume={},
    number={},
    pages={2852-2857},
    doi={10.1109/ICC45855.2022.9839225}
}

@article{Yang2022MetaSLAM,
    author={Yang, Ziang and Zhang, Haobo and Zhang, Hongliang and Di, Boya and Dong, Miaomiao and Yang, Lu and Song, Lingyang},
    journal={IEEE Trans. Wirel. Commun.}, 
    title={{MetaSLAM}: Wireless Simultaneous Localization and Mapping Using Reconfigurable Intelligent Surfaces}, 
    year={2022},
    volume={},
    number={},
    pages={1-1},
    doi={10.1109/TWC.2022.3213053}
}

@inproceedings{Hyowon2023:Bounce,
    author={Kim, Hyowon and Fascista, Alessio and Chen, Hui and Ge, Yu and Alexandropoulos, George C. and Seco-Granados, Gonzalo and Wymeersch, Henk},
    booktitle={2023 IEEE Int. Conf. Commun. Workshops}, 
    title={{RIS}-Aided Monostatic Sensing and Object Detection with Single and Double Bounce Multipath}, 
    year={2023},
    volume={},
    number={},
    pages={1883-1889},
    doi={10.1109/ICCWorkshops57953.2023.10283494}
}

@article{Chen2024:Sidelink,
    author={Chen, Hui and Zheng, Pinjun and Keskin, Musa Furkan and Al-Naffouri, Tareq and Wymeersch, Henk},
    journal={IEEE Trans. Wirel. Commun.}, 
    title={Multi-{RIS}-Enabled {3D} Sidelink Positioning}, 
    year={2024},
    volume={},
    number={},
    pages={1-1},
    doi={10.1109/TWC.2024.3353387}
}

@article{AbuShaban2018Bounds,
    author={Abu-Shaban, Zohair and Zhou, Xiangyun and Abhayapala, Thushara and Seco-Granados, Gonzalo and Wymeersch, Henk},
    journal={IEEE Trans. Wirel. Commun.}, 
    title={Error Bounds for Uplink and Downlink {3D} Localization in {5G} Millimeter Wave Systems}, 
    year={2018},
    volume={17},
    number={8},
    pages={4939-4954},
    doi={10.1109/TWC.2018.2832134}
}

@article{Kim2023:SLAM,
    author={Kim, Hyowon and Chen, Hui and Keskin, Musa Furkan and Ge, Yu and Keykhosravi, Kamran and Alexandropoulos, George C. and Kim, Sunwoo and Wymeersch, Henk},
    journal={IEEE Trans. Wirel. Commun.}, 
    title={{RIS}-Enabled and Access-Point-Free Simultaneous Radio Localization and Mapping}, 
    year={2023},
    volume={},
    number={},
    pages={1-1},
    doi={10.1109/TWC.2023.3307455}
}

@article{Zhang2022:MetaRadar,
    author={Zhang, Haobo and Zhang, Hongliang and Di, Boya and Bian, Kaigui and Han, Zhu and Song, Lingyang},
    journal={IEEE Trans. Wirel. Commun.}, 
    title={{MetaRadar}: Multi-Target Detection for Reconfigurable Intelligent Surface Aided Radar Systems}, 
    year={2022},
    volume={21},
    number={9},
    pages={6994-7010},
    doi={10.1109/TWC.2022.3153792}
}

@article{Meng2022:Sensing,
    author={Meng, Kaitao and Wu, Qingqing and Schober, Robert and Chen, Wen},
    journal={IEEE Trans. Commun.}, 
    title={Intelligent Reflecting Surface Enabled Multi-Target Sensing}, 
    year={2022},
    volume={70},
    number={12},
    pages={8313-8330},
    doi={10.1109/TCOMM.2022.3217564}
}

@article{Huang2023:Harnessing,
    author={Huang, Yixuan and Yang, Jie and Tang, Wankai and Wen, Chao-Kai and Xia, Shuqiang and Jin, Shi},
    journal={IEEE Trans. Wirel. Commun.}, 
    title={Joint Localization and Environment Sensing by Harnessing {NLOS} Components in {RIS}-aided {mmWave} Communication Systems}, 
    year={2023},
    volume={},
    number={},
    pages={1-1},
    doi={10.1109/TWC.2023.3266343}
}

@article{Tang2021:RIS,
    author={Tang, Wankai and Chen, Ming Zheng and Chen, Xiangyu and Dai, Jun Yan and Han, Yu and Di Renzo, Marco and Zeng, Yong and Jin, Shi and Cheng, Qiang and Cui, Tie Jun},
    journal={IEEE Trans. Wirel. Commun.}, 
    title={Wireless Communications With Reconfigurable Intelligent Surface: Path Loss Modeling and Experimental Measurement}, 
    year={2021},
    volume={20},
    number={1},
    pages={421-439},
    doi={10.1109/TWC.2020.3024887}
}

@article{He2023:WiFi,
    author={He, Ying and Zhang, Dongheng and Chen, Yan},
    journal={IEEE Internet Things J.}, 
    title={High-Resolution {WiFi} Imaging With Reconfigurable Intelligent Surfaces}, 
    year={2023},
    volume={10},
    number={2},
    pages={1775-1786},
    doi={10.1109/JIOT.2022.3210686}
}

@article{Hu2021:MetaSensing,
    author={Hu, Jingzhi and Zhang, Hongliang and Bian, Kaigui and Renzo, Marco Di and Han, Zhu and Song, Lingyang},
    journal={IEEE J. Sel. Areas Commun.}, 
    title={MetaSensing: Intelligent Metasurface Assisted {RF 3D} Sensing by Deep Reinforcement Learning}, 
    year={2021},
    volume={39},
    number={7},
    pages={2182-2197},
    doi={10.1109/JSAC.2021.3078492}
}

@article{Zhang2023:WiFi,
    author = {Zhang, Ganlin and Zhang, Dongheng and He, Ying and Chen, Jinbo and Zhou, Fang and Chen, Yan},
    year = {2023},
    month = {10},
    pages = {1-1},
    title = {Multi-Person Passive WiFi Indoor Localization With Intelligent Reflecting Surface},
    volume = {PP},
    journal = {IEEE Trans. Wirel. Commun.},
    doi = {10.1109/TWC.2023.3244369}
}

@article{Guo2017Beamspace,
    author={Guo, Ziyu and Wang, Xiaodong and Heng, Wei},
    journal={IEEE Trans. Wirel. Commun.}, 
    title={Millimeter-Wave Channel Estimation Based on 2-D Beamspace {MUSIC} Method}, 
    year={2017},
    volume={16},
    number={8},
    pages={5384-5394},
    doi={10.1109/TWC.2017.2710049}
}

@article{Liu2022:Fundamental,
    author={Liu, An and Huang, Zhe and Li, Min and Wan, Yubo and Li, Wenrui and Han, Tony Xiao and Liu, Chenchen and Du, Rui and Tan, Danny Kai Pin and Lu, Jianmin and Shen, Yuan and Colone, Fabiola and Chetty, Kevin},
    journal={IEEE Commun. Surv. Tutor.}, 
    title={A Survey on Fundamental Limits of Integrated Sensing and Communication}, 
    year={2022},
    volume={24},
    number={2},
    pages={994-1034},
    doi={10.1109/COMST.2022.3149272}
}

@book{Spagnolini2018:Statistical,
    author = {Spagnolini, Umberto},
    publisher = {John Wiley \& Sons, Ltd},
    isbn = {9781119294016},
    title = {Statistical Signal Processing in Engineering},
    doi = {https://doi.org/10.1002/9781119294016.ch10},
    year = {2018},
}

@book{Madsen2010:General,
  author = {Henrik Madsen and Poul Thyregod},
  edition = {1st ed.},
  publisher = {{CRC} press},
  title = {Introduction to general and generalized linear models},
  doi = {10.1201/9781439891148},
  year = 2010
}

@book{Tagliaferri2023:Cooperative,
    title={Cooperative Coherent Multistatic Imaging and Phase Synchronization in Networked Sensing}, 
    author={Dario Tagliaferri and Marco Manzoni and Marouan Mizmizi and Stefano Tebaldini and Andrea Virgilio Monti-Guarnieri and Claudio Maria Prati and Umberto Spagnolini},
    year={2023},
    publisher={arXiv},
}

@article{Cui2021:Integrating,
    author={Cui, Yuanhao and Liu, Fan and Jing, Xiaojun and Mu, Junsheng},
    journal={IEEE Network}, 
    title={Integrating Sensing and Communications for Ubiquitous IoT: Applications, Trends, and Challenges}, 
    year={2021},
    volume={35},
    number={5},
    pages={158-167},
    doi={10.1109/MNET.010.2100152}
}

@article{Emadi2018:OMP,
    title = {{OMP}-based {DOA} estimation performance analysis},
    journal = {Digital Signal Processing},
    volume = {79},
    pages = {57-65},
    year = {2018},
    issn = {1051-2004},
    doi = {https://doi.org/10.1016/j.dsp.2018.04.006},
    author = {Mohammad Emadi and Ehsan Miandji and Jonas Unger}
}

@inproceedings{Wymeersch2020:Detection,
    author={Wymeersch, Henk and Seco-Granados, Gonzalo},
    booktitle={2020 2nd 6G Wireless Summit (6G SUMMIT)}, 
    title={Adaptive Detection Probability for {mmWave 5G SLAM}}, 
    year={2020},
    volume={},
    number={},
    pages={1-5},
    doi={10.1109/6GSUMMIT49458.2020.9083898}
}

@book{Saggese2023:Control,
    author = {Fabio Saggese and Victor Croisfelt and Radosław Kotaba and Kyriakos Stylianopoulos and George C. Alexandropoulos and Petar Popovski},
    title = {A Framework for Control Channels Applied to Reconfigurable Intelligent Surfaces},
    publisher = {arXiv},
    year = {2023}
}

@article{An2022:Control,
    author={An, Jiancheng and Xu, Chao and Wu, Qingqing and Ng, Derrick Wing Kwan and Renzo, Marco Di and Yuen, Chau and Hanzo, Lajos},
    journal={IEEE Wirel. Commun.}, 
    title={Codebook-Based Solutions for Reconfigurable Intelligent Surfaces and Their Open Challenges}, 
    year={2022},
    volume={},
    number={},
    pages={1-8},
    doi={10.1109/MWC.010.2200312}
}

@article{Chepuri2023:Integrated,
    author={Chepuri, Sundeep Prabhakar and Shlezinger, Nir and Liu, Fan and Alexandropoulos, George C. and Buzzi, Stefano and Eldar, Yonina C.},
    journal={IEEE Signal Process. Mag.}, 
    title={Integrated Sensing and Communications With Reconfigurable Intelligent Surfaces: From signal modeling to processing}, 
    year={2023},
    volume={40},
    number={6},
    pages={41-62},
    doi={10.1109/MSP.2023.3279986}
}

@inproceedings{Sofotasios2014:Analytic,
    author={Paschalis Sofotasios and Mikko Valkama and Theodoros Tsifitsis and Yury Brychkov and Steven Freear and George Karagiannidis},
    title={Analytic Solutions to a {Marcum} {Q}-Function-Based Integral and Application in Energy Detection of Unknown Signals over Multipath Fading Channels},
    booktitle={9th Int. Conf. Cognitive Radio Oriented Wireless Networks},
    year={2014},
    doi={10.4108/icst.crowncom.2014.255383}
}

@article{Tropp2007:OMP,
    author={Tropp, Joel A. and Gilbert, Anna C.},
    journal={IEEE Trans. Inf. Theo.}, 
    title={Signal Recovery From Random Measurements Via Orthogonal Matching Pursuit}, 
    year={2007},
    volume={53},
    number={12},
    pages={4655-4666},
    doi={10.1109/TIT.2007.909108}
}

@article{Malioutov2005:Sparse,
    author={Malioutov, D. and Cetin, M. and Willsky, A.S.},
    journal={IEEE Trans. Signal Process.}, 
    title={A sparse signal reconstruction perspective for source localization with sensor arrays}, 
    year={2005},
    volume={53},
    number={8},
    pages={3010-3022},
    doi={10.1109/TSP.2005.850882}
}

@article{Antonio2007:Ambiguity,
    author={San Antonio, Geoffrey and Fuhrmann, Daniel R. and Robey, Frank C.},
    journal={IEEE J. Sel. Top. Signal Process.},
    title={MIMO Radar Ambiguity Functions}, 
    year={2007},
    volume={1},
    number={1},
    pages={167-177},
    doi={10.1109/JSTSP.2007.897058}
}

@inproceedings{Rahmathullah2017:GOSPA,
    author={Rahmathullah, Abu Sajana and Garc\'ia-Fern\'andez, \'Angel F. and Svensson, Lennart},
    booktitle={2017 20th Int. Conf. Inf. Fusion}, 
    title={Generalized optimal sub-pattern assignment metric}, 
    publisher={IEEE},
    year={2017},
    volume={},
    number={},
    pages={1-8},
    doi={10.23919/ICIF.2017.8009645}
}

\end{document}